\documentclass[USenglish]{lipics-v2019}

\nolinenumbers

\usepackage{setspace} 
\usepackage{cleveref}
\usepackage{url}
\usepackage{xcolor}
\usepackage{float}
\usepackage{amssymb}
\usepackage{amsmath}
\usepackage[inline]{enumitem}
\usepackage[ruled,vlined,linesnumbered]{algorithm2e}
\usepackage{graphicx,color}
\usepackage{verbatim}
\usepackage{mathtools}
\usepackage{bbding}
\usepackage{pifont}
\usepackage[labelformat=simple]{subcaption} 
\usepackage{array}
\usepackage{makecell}
\usepackage{wrapfig}


\newcommand{\RNum}[1]{\uppercase\expandafter{\romannumeral #1\relax}}


\newtheorem{conjecture}{Conjecture}

\newcommand {\tablecell}[1]
{
  \begin {minipage} {3.5cm}
  \centering
  \vspace {.3em}
  #1
  \vspace {.3em}
  \end {minipage}
}
\newcommand {\tablerow}[4]
{
  #1
  & \tablecell{#2}
  & \tablecell{#3}
  & \tablecell{#4}
  \\
}








\newcommand{\Nats}{{\mathbb{N}}}             
\newcommand{\Reals}{{\mathbb{R}}}            
\newcommand{\eps}{\varepsilon}               

\newcommand{\polylog}{\mathrm{polylog}}
\newcommand{\poly}{\mathrm{poly}}

\newcommand{\FSD}{\mathsf{FSD}_{\delta}}
\newcommand{\strip}{\mathsf{ST}_{\delta}}
\newcommand{\Strip}{\mathsf{ST}}
\newcommand{\spine}{\mathsf{SP}_{\delta}}
\newcommand{\Spine}{\mathsf{SP}}
\newcommand{\slist}{\mathsf{L}_\delta}

\newcommand{\partition}{\mathsf{Prt}}

\newcommand{\Frechet}{\mathsf{F}}
\newcommand{\dFrechet}{\mathsf{dF}}
\newcommand{\wFrechet}{\mathsf{wF}}
\newcommand{\Hausdorff}{\mathsf{H}}
\newcommand{\diHausdorff}{\overrightarrow{\mathsf{H}}}
\newcommand{\diHausdorffP}{\overleftarrow{\mathsf{H}}}
\usepackage{mathtools}
\DeclarePairedDelimiter\segment {\langle}{\rangle}
\newcommand{\Frd}{Fr\'echet distance}
\newcommand{\etal}{{\it et al.\xspace}}

\newcommand{\highlight}[1]{\textcolor{darkblue}{#1}}
\newcommand{\tinyspace}{\hspace{1px}}

\def\C{{\cal C}}

\def\G{{\cal G}}

\def\L{{\cal L}}

\def\N{{\cal N}}

\def\P{{\cal P}}

\def\R{{\cal R}}

\def\V{{\cal V}}

\newcommand{\figref}[1]{Figure~\ref{#1}}

\newcommand{\lemref}[1]{Lemma~\ref{#1}}

\newcommand{\corref}[1]{Corollary~\ref{#1}}
\newcommand{\thmref}[1]{Theorem~\ref{#1}}
\newcommand{\secref}[1]{Section~\ref{#1}}

\newcommand{\algref}[1]{Algorithm~\ref{#1}}

\definecolor{lgray}{gray}{0.5}
\definecolor{darkblue}{rgb}{0, 0, 0.7}

\newcommand{\myemph}[1]{\emph{#1}}
\newcommand{\mathemph}[1]{#1}

\title{Global Curve Simplification}


\author{Mees van de Kerkhof}{Utrecht University, the Netherlands}{m.a.vandekerkhof@uu.nl}{}{Supported by the Netherlands Organisation for Scientific Research (NWO) under project number 628.011.005.}
\author{Irina Kostitsyna}{TU Eindhoven, the Netherlands}{i.kostitsyna@tue.nl}{}{}
\author{Maarten L\"offler}{Utrecht University, the Netherlands}{m.loffler@uu.nl}{}{Partially supported by the Netherlands Organisation for Scientific Research (NWO) under project numbers 614.001.504 and 628.011.005.}
\author{Majid Mirzanezhad}{Tulane University, USA}{mmirzane@tulane.edu}{}{Supported by the National Science Foundation grant CCF-1637576.}
\author{Carola Wenk}{Tulane University, USA}{cwenk@tulane.edu}{}{Supported by the National Science Foundation grant CCF-1637576.}

\authorrunning{M. van de Kerkhof, I. Kostitsyna, M. L\"offler, M.~Mirzanezhad and C. Wenk} 

\Copyright{Mees van de Kerkhof, Irina Kostitsyna, Maarten L\"offler, Majid Mirzanezhad and Carola Wenk}
\ccsdesc[100]{Theory of computation~Computational geometry}
\keywords{Curve simplification, Fr\'echet distance, Hausdorff distance}

\hideLIPIcs

\begin{document}
\maketitle

\begin{abstract}
Due to its many applications, {\em curve simplification} is a long-studied problem in computational geometry and adjacent disciplines, such as graphics, geographical information science, etc.
Given a polygonal curve $P$ with $n$ vertices, the goal is to find another polygonal curve $P'$ with a smaller number of vertices such that $P'$ is sufficiently similar to $P$.
Quality guarantees of a simplification are usually given in a {\em local} sense, bounding the distance between a shortcut and its corresponding section of the curve. 
In this work we aim to provide a systematic overview of curve simplification problems under {\em global} distance measures that bound the distance between $P$ and $P'$. 
We consider six different curve distance measures: three variants of the {\em Hausdorff} distance and three variants of the \emph{Fr\'echet} distance. 
And we study different restrictions on the choice of vertices for $P'$. 
We provide polynomial-time algorithms for some variants of the global curve simplification problem, and show NP-hardness for other variants. Through this systematic study we observe, for the first time, some surprising patterns, and suggest directions for future research in this important area.

\end{abstract}


\section{Introduction}\label{sec:intro}
Due to its many applications, {\em curve simplification} (also known as {\em line simplification}) is a long-studied problem in computational geometry and adjacent disciplines, such as graphics, geographical information science, etc.
Given a polygonal curve $P$ with $n$ vertices, the goal is to find another polygonal curve $P'$ with a smaller number of vertices such that $P'$ is sufficiently similar to $P$.
Classical algorithms for this problem famously include a simple recursive scheme by Douglas and Peucker~\cite{douglas73algorithms}, and a more involved dynamic programming approach by Imai and Iri~\cite{imai88algorithms}; both are frequently implemented and cited.
Since then, numerous further results on curve simplification, often in specific settings or under additional constraints, have been obtained~\cite {abam10streaming,ahmw-nltaa-05,barequet2002approx3,berg98correct, buzer07optimal,cc-apcwmnls-96,chen05angle,g-nmfc-96,ii-aoaaplf-86}.

Despite its popularity, the Douglas-Peucker algorithm comes with no provable quality guarantees. The method by Imai and Iri, though slower, was introduced as an alternative which does supply guarantees: it finds an optimal shortest path in a graph in which potential shortcuts are marked as either {\em valid} or {\em invalid}, based on their distance to the corresponding sections of the input curve.
However, Agarwal \etal~\cite {ahmw-nltaa-05} note that the Imai-Iri algorithm does not actually globally optimize any distance measure between the original curve $P$ and the simplification $P'$. This work initiated a more formal study of curve simplification; van Kreveld \etal~\cite {klw-oopsihfd-18} systematically show that both Douglas-Peucker and Imai-Iri may indeed produce far-from-optimal results.

This raises a question of what it means for a simplification to be optimal.
We may view it as a dual-optimization problem: we wish to minimize the number of vertices of $P'$ given a constraint on its similarity to $P$. This depends on the distance measure used; popular curve distance measures include the {\em Hausdorff} and {\em Fr\'echet} distances (variants and formal definitions are discussed in \secref{sec:distance_measures}).



However, the difference in interpretation between Agarwal \etal~and Imai and Iri lies not so much in the choice of distance measure, but rather what exactly the measure is applied to.
In fact, the Imai-Iri algorithm is optimal in a {\em local} sense: it outputs a subsequence of the vertices of $P$ such that the Hausdorff distance between each shortcut and {\em its corresponding section of the input} is bounded: each shortcut approximates the section of $P$ between the vertices of the shortcut.

In this work, we underline this difference by using the term {\em global} simplification when a bound on a distance measure must be satisfied between $P$ and $P'$ (formal definition in \secref{sec:GCSOverview}), and {\em local} simplification when a bound on a distance measure must be satisfied between each edge of $P'$ and its corresponding section of $P$.
Clearly, a local simplification is also a global simplification, but the reverse is not necessarily true, see Figure~\ref {fig:globallocal}.
\begin{figure} [b]
	\centering \includegraphics {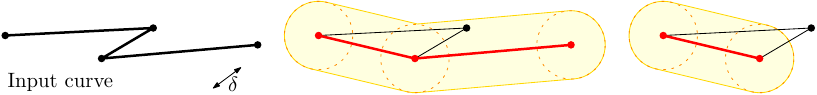} 
	\caption{For a target distance $\delta$, the red curve (middle) is a global simplification of the input curve (left), but it is not a local simplification, since the first shortcut does not closely represent its corresponding curve section (right). The example works for both Hausdorff and Fr\'echet distance.}
	\label{fig:globallocal}
\end{figure}

Both local and global simplifications have their merits: one can imagine situations where it is important that each segment of a simplified curve is a good representation of the curve section it replaces, but in other applications (e.g., visualization) it is really the similarity of the overall result to the original that matters.
Most existing work on curve simplification falls in the {\em local} category.
In this we focus on the curve simplification when $D(.,.)$ is considered a global distance measure. For more brevity we call the problem {\em GCS}.

\subsection {Existing Work on Global Curve Simplification} \label {sec:related}


Surprisingly, only a few results on simplification under global distance measures are known~\cite {ahmw-nltaa-05, bjwyz-s3pcudfd-08, bc-pscc-18, klw-oopsihfd-18}; consequently, what makes the problem difficult is not well understood.

Agarwal~\etal~\cite {ahmw-nltaa-05} first consider the idea of global simplification. They introduce what they call a {\em weak simplification}: a model in which the vertices of the simplification are not restricted to be a subset of the input vertices, but can lie anywhere in the ambient space.\footnote {We choose not to adopt the terms {\em weak} and {\em strong} in this context because we will also distinguish an intermediate model, and to avoid confusion with the {\em weak Fr\'echet} distance; refer to Section~\ref {sec:vertex_restrictions}.}
Interestingly, they compare this to a {\em local} simplification where vertices are restricted to be a subset of the input.
We may interpret a combination of two of their results (Theorem 1.2 and Theorem 4.1) as an approximation-algorithm for global curve simplification with unrestricted vertices under the Fr\'echet distance: for a given curve $P$ and threshold $\delta$ one can compute, in $O(n \log n)$ time, a simplification $P'$ which has at most the number of vertices of an optimal simplification with threshold $\delta/8$.

Bereg~\etal~\cite{bjwyz-s3pcudfd-08} first explicitly consider global simplification in the setting where vertices are restricted to be a subsequence of input vertices, but using the {\em discrete Fr\'echet distance}: a variant of the Fr\'echet distance which only measures distances between vertices (refer to Section~\ref {sec:distance_measures}).
They show how to compute an optimal simplification where vertices are restricted to be a subsequence of the vertices in $P$ in $O(n^2)$ time, and they give an $O(n\log n)$ time algorithm for the setting where vertices may be placed freely.


Van Kreveld~\etal~\cite{klw-oopsihfd-18} consider the same (global distance, but vertices should be a subsequence) setting, but for the continuous Fr\'echet and Hausdorff distances. They give polynomial-time algorithms for the Fr\'echet distance and directed  Hausdorff distance (from simplification curve to input curve), but they show the problem is NP-hard for the directed Hausdorff distance in the opposite direction and for the undirected Hausdorff distance.
Recently, Bringmann and Chaudhury \cite{bc-pscc-18} improved their result for the Fr\'echet distance when the vertices in $P'$ are a subsequence of  to $O(n^3)$, and also give a conditional cubic lower bound.

Finally, we mention there is earlier work which does not explicitly study simplification under global distance measures, but contains results that may be reinterpreted as such.
Guibas \etal~\cite{ghms-apswmlp-93} provide algorithms for computing minimum-link paths that stab a sequence of regions in order. One of the variants, presented in Theorems 10 and 14 of ~\cite{ghms-apswmlp-93}, computes what may be seen as an optimal simplification under the Fr\'echet distance with no vertex restrictions, i.e., the same setting that was studied by Agarwal~\etal, in $O(n^2\log^2 n)$ time in the plane.
%




In Section~\ref {sec:class}, we present a formal classification of global curve simplification problems.
Table~\ref {tab:results} gives an overview of known results, as well as several new results to complement these (in some cases straightforward adaptations of known results).
In the remainder of the paper we sketch the main ideas behind our new results.

\section{Classification}\label{sec:class}

We aim to provide a systematic overview of curve simplification problems under global distance measures. To this end, we have collected known results and arranged them in a table (Table~\ref {tab:results}), and provided several new results to complement these (refer to Section~\ref {sec:results}). This allows us for the first time to observe some surprising patterns, and it suggests directions for future research in this important area.
We first discuss the dimensions of the table.

\subsection {Distance Measures} \label {sec:distance_measures}

For our study, we consider six different curve distance measures: three variants of the {\em Hausdorff} distance and three variants of the \emph{Fr\'echet} distance.
These are among the most popular curve distance measures in the algorithms literature.
The Hausdorff distance captures the maximum distance from a point on one curve to a point on the other curve.
The variants of the Hausdorff distance we consider are the directed Hausdorff distance from the input to the output, the directed Hausdorff distance from the output to the input, and the undirected (or bidirectional) Hausdorff distance.
The Fr\'echet distance captures the maximum distance between a pair of points traveling along the two curves simultaneously without moving backward.
We now formally define all six distance measures.

Let ${P=\langle p_1,p_2,\cdots, p_n \rangle}$ be the input
polygonal curve.
We treat $P$ as a continuous
map $P:[1,n] \rightarrow \mathbb{R}^d$, where $P(i)=p_i$ for integer $i$,
and the $i$-th edge is linearly parametrized as $P(i + \lambda) =
(1-\lambda) p_i + \lambda p_{i +1}$.
We write $P[s,t]$ for the subcurve between $P(s)$ and $P(t)$ and denote the {\em shortcut}, i.e., the  straight line connecting them, by $\segment{P(s)P(t)}$.

The \emph{\Frd} between two polygonal curves $P$ and $Q$, with $n$ and $m$ vertices, respectively, is 
${\Frechet(P, Q)}= \inf_{(\sigma,\theta)} \max_{t} \| P(\sigma(t))-Q(\theta(t))\|$, where $\sigma$ and $\theta$ are continuous non-decreasing functions from $[0,1]$ to $[1,n]$ and $[1,m]$, respectively.
If $\sigma$ and $\theta$ are continuous but not necessarily monotone, the resulting infimum is called the \emph{weak \Frd}.
Finally, the {\em discrete \Frd} is a variant where $\sigma$ and $\theta$ are discrete functions from $\{1, \ldots, k\}$ to $\{1, \ldots, n\}$ and $\{1, \ldots, m\}$ with the property that $|\sigma(i)-\sigma(i+1)|\le 1$.

The \emph{directed Hausdorff distance} between two polygonal curves (or more generally, compact sets)
$P$ and $Q$ is defined as
${\diHausdorff(P,Q)}=\max\limits_{p \in
	P} \min\limits_{q \in Q}\|p-q\|$.
The \emph{undirected Hausdorff distance} is then simply the maximum over the two directions:
${\Hausdorff(P,Q)}= \max \{\diHausdorff(P,Q),\diHausdorff(Q,P)\}$.

\subsection {Vertex Restrictions} \label {sec:vertex_restrictions}

\begin{figure}
	\centering
	\includegraphics {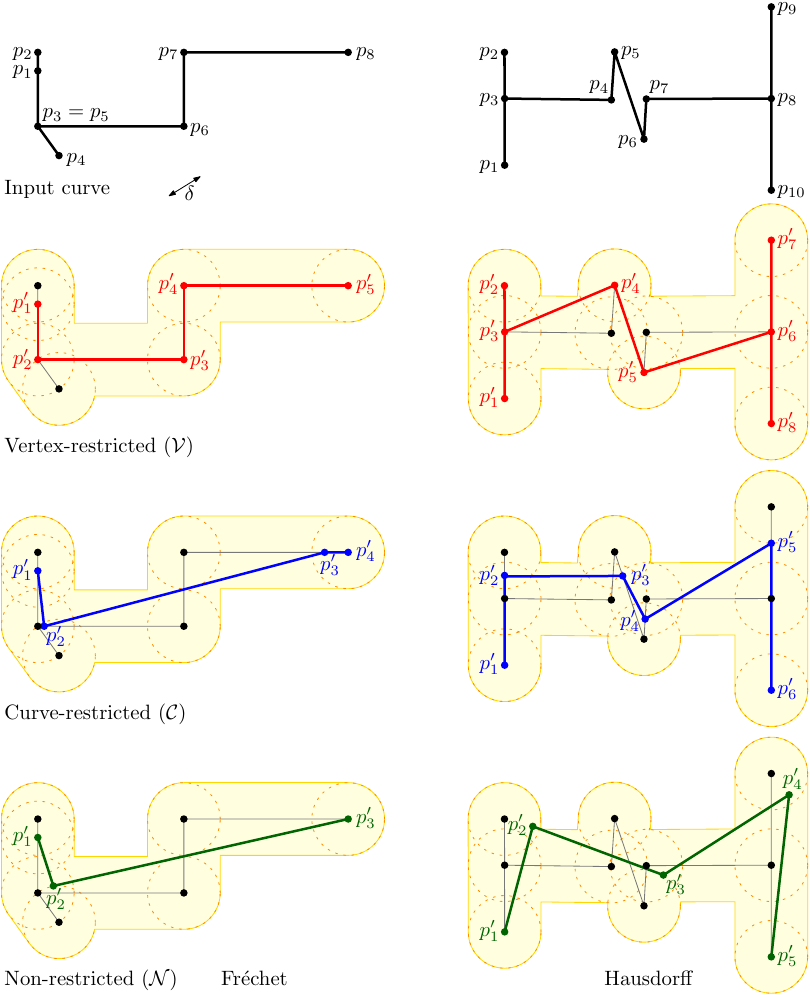}
	\caption{
		The GCS simplified curve in different restrictions under Fr\'echet (left) and Hausdorff (right) distances. 
		The vertex-restricted case (in red) requires 5 vertices for \Frd\ and 8 vertices for the Hausdorff distance. The curve-restricted case (in blue) requires 4 vertices for \Frd\ and 6 vertices for the Hausdorff distance. The non-restricted case (in green) requires only 3 vertices for \Frd\ and only 5 vertices for the Hausdorff distance.
	}
	\label{fig:restrictions}
\end{figure}

Once we have fixed the distance measure and agreed that we wish to apply it globally, one important design decision still remains to be made.  Traditional curve simplification algorithms consider the (polygonal) input curve $P$ to be a sequence of points, and produce as output $P'$  a subsequence of this sequence.  However, if we measure the distance globally, there may be no strong reason to restrict the family of acceptable output curves so much: the distance measure already ensures the similarity between input and output curves, so perhaps we may allow a more free choice of vertex placement.  Indeed, several results under this more relaxed viewpoint exist, as discussed in Section~\ref {sec:related}.
Here, we choose to investigate three increasing levels of freedom:
(1) \emph{vertex-restricted $({\V})$}, where vertices of $P'$ have to be a subsequence of vertices of $P$;
(2) \emph{curve-restricted (${\C}$)}, where vertices of $P'$ can lie anywhere on $P$ but have to respect the order along $P$; and 
(3) \emph{non-restricted (${\N}$)}, where vertices of $P'$ can be anywhere in the ambient space. 
Figure~\ref {fig:restrictions} illustrates the difference between the three models.
The third category does not make sense for local curve simplification, but is very natural for global curve simplification.
Observe that when the vertices of a simplified curve have more freedom, the optimal simplified curve never has more, but may have fewer, vertices.

\subsection {Global Curve Simplification Overview}\label{sec:GCSOverview}

We are now ready to formally define a class of global curve simplification problems.  When $D(\cdot,\cdot)$ denotes a distance measure between curves (e.g., the \emph{Hausdorff} or \emph{Fr\'echet} distance), the \emph{global curve simplification (GCS)} problem asks what is the smallest number $k$ such that there exists a curve $P'$ with at most $k$ vertices, chosen either as a subsequence of the vertices of $P$ (variant ${\V}$), as a sequence of points on the edges of $P$ in the correct order along $P$ (variant ${\C}$), or chosen anywhere in $\Reals^d$ (variant ${\N}$) and such that $D(P,P') \le \delta$, for a given threshold $\delta$.  In all cases, we require that $P$ and $P'$ start at the same point and end at the same point.

Table~\ref {tab:results} summarizes results for the different variants of the GCS problem obtained by instantiating $D$ with the Hausdorff or Fr\'echet distance measures and by applying a vertex restriction $R$.
Here $R \in \{\V, \C, \N\}$, and $D$ is either the undirected Hausdorff distance $\Hausdorff$, the directed Hausdorff distance $\diHausdorffP(P,\delta)$ from $P$ to $P'$, the directed Hausdorff distance $\diHausdorff(P,\delta)$ from $P'$ to $P$, the \Frd\ $\Frechet$, the discrete \Frd\ $\dFrechet$, or the weak \Frd\ $\wFrechet$. 
Throughout the paper we use $D_R(P,\delta)$ to denote a curve $P'$ that is the optimal $R$-restricted simplification of $P$ with $D(P,P')\leq\delta$.

Since GCS is a dual-optimization problem, we call an algorithm an {\em $(\alpha,\beta)$-approximation} if it computes a solution with distance at most $\beta\delta$ and uses at most $\alpha$ times more shortcuts than the optimal solution for distance $\delta$.

\begin{table}[!ht]
	\centering 
	\begin{tabular}[c]{|c|c|c|c|}
		\hline 
		Distance & Vertex-restricted $(\V)$ & Curve-restricted $(\C)$ & Non-restricted ($\N$)\\ 
		\hline \hline
		\tablerow{$\diHausdorffP(P,\delta)$}
		{ 
			strongly NP-hard~\cite{klw-oopsihfd-18}
		}
		{ 
			weakly NP-hard 
			\highlight{(Thm \ref{thm:dihaushard})}
		}
		{ 
			?
		}
		\hline
		\tablerow{$\diHausdorff(P,\delta)$}
		{ 
			$O(n^4)$~\cite{klw-oopsihfd-18} \\
			O($n^2\polylog\ n$)
			\highlight{(Thm \ref{thm:VRHP'toP})}
		}
		{ 
			weakly NP-hard 
			\highlight{(Thm \ref{thm:CR3D})}
		}
		{ 
			$\poly(n)$ {\cite{klps-ocmlpp-17}}
		}
		\hline
		\tablerow{$\Hausdorff(P,\delta)$}
		{ 
			strongly NP-hard~\cite{klw-oopsihfd-18}
		}
		{ 
			strongly NP-hard
			\highlight{(Cor \ref{cor:CRHD})}
		}
		{ 
			strongly NP-hard 
			\highlight{(Thm \ref{thm:NRHD})}
		}
		\hline
		\tablerow{$\Frechet(P,\delta)$}
		{ 
			$O(mn^5)$~\cite{klw-oopsihfd-18} \\
			$O(n^3)$~\highlight{(Thm \ref{thm:VRFD_improved})} \\
			$O(n^3)$~\cite{bc-pscc-18}
		}
		{ 
			$O(n)$ in $\Reals^1$ \highlight{(Thm \ref{thm:CRFD1})} \\
			weakly NP-hard in $\Reals^2$ \highlight{(Thm \ref{thm:unipinhole})}
		}
		{ 
			$O(n^2\log^2n)$ in $\Reals^2$~\cite{ghms-apswmlp-93} \\
			$O(n \log n)$ $(1,8)$-approx~\cite {ahmw-nltaa-05} \\
			$O^*({n^2\log n \log \log n})$
			$(2,1+\eps)$-approx \highlight{(Thm \ref{thm:NRFD})}
		}
		\hline
		\tablerow{$\dFrechet(P,\delta)$}
		{ 
			$O(n^2)$ \cite{bjwyz-s3pcudfd-08}
		}
		{
			$O(n^3)$ \highlight{(Thm \ref{thm:CRDF})} 
		}
		{
			$O(n\log n)$~\cite{bjwyz-s3pcudfd-08}
		}
		\hline
		\tablerow{$\wFrechet(P,\delta)$}
		{ 
			$O(n^3)$ \highlight{(Thm \ref{thm:VRWFD})}
		}
		{ 
			weakly NP-hard \highlight{(Thm \ref{thm:CR3D})} 
		}
		{ 
			
			$(2,1+\eps)$-approx \highlight{(Cor \ref{cor:NRWFD})}
		}
		\hline
	\end{tabular}
	\caption{Known and new results for the GCS problem under global distance measures. }
	\label{tab:results}
\end{table}

\subsection{New Results} \label {sec:results}

In order to provide a thorough understanding of the different variants of the GCS problem we provide several new results.  In some cases these are straightforward adaptations of known results, in other cases they require deeper ideas.
In \secref{subsec:FrechetDP}, we give
polynomial time algorithms for the vertex-restricted GCS problem under
strong Fr\'echet (\thmref{thm:VRFD_improved}) and weak Fr\'echet
(\thmref{thm:VRWFD}) distance%
\footnote {
	An algorithm with a running time of $O(n^4)$ for the vertex-restricted variant under the strong \Frd~is presented in \thmref{thm:VRFD}.
	Bringmann and Chaudhury \cite{bc-pscc-18} independently developed an $O(n^3)$ algorithm
	for the same problem. Our current algorithm in
	\secref{subsec:FrechetDP} 
	uses a vital insight from~\cite{bc-pscc-18} to improve our $O(n^4)$ algorithm to $O(n^3)$ as well. The resulting algorithm now uses less space than the algorithm in~\cite{bc-pscc-18} and can be extended to the weak Fr\'echet distance.
}. In Section~\ref{sec:VRHausP'toP} we show that the vertex-restricted problem under the directed Hausdorff distance from $P'$ to $P$ considered by van Kreveld \etal~\cite{klw-oopsihfd-18} can be improved to $O(n^3\log n)$ (\thmref{thm:VRHP'toP}). 
%
In \secref{sec:CR} we prove that solving the curve-restricted
GCS is NP-hard for almost all measures regarded in
this paper except for the discrete \Frd\
(\thmref{thm:CRDF}) and strong \Frd\ in $\Reals^1$
(\thmref{thm:CRFD1}) for which we present polynomial time
algorithms. To the best of our knowledge, these are the first results
in the curve-restricted setting under global distance
measures. 
Finally, in Section~\ref{sec:NRFD},  we give $(2,1+\eps)$-approximation algorithm for
$\Frechet_\N(P,\delta)$ which runs
in $O^*(n^2\log n\log \log n)$ time where $O^*$ hides
polynomial factors of $1/\eps$ (\thmref{thm:NRFD}). We also argue
that the same result holds for weak \Frd\
(\corref{cor:NRWFD}). In Section~\ref{sec:NRHD} we show that this problem becomes NP-hard when we
consider the Hausdorff distance (\thmref{thm:NRHD}).

\subsection {Discussion} \label {sec:discuss}





With both the existing work and our new results in place, we now have a good overview of the complexity of the different variants of the GCS problem, see Table~\ref {tab:results}.


Observe that the curve-restricted variants seem to generally be harder than both the vertex-restricted and the non-restricted variants.
That means that, on the one hand, broadening the search space from the vertex-restricted to the curve-restricted case makes the problem harder. But on the other hand it does not give unrestricted freedom of choice, which in turn enables the development of efficient algorithms for the unrestricted case.

Another interesting pattern can be observed for the Hausdorff distance measures. The direction of the Hausdorff distance makes a significant difference in whether the corresponding GCS problem is NP-hard or polynomially solvable.  The GCS problem for the undirected Hausdorff distance is at least as hard as for the directed Hausdorff distance from the input curve to the simplification.

Drawing upon the above observations we make the following conjecture:
\begin{conjecture}
	The curve-restricted and non-restricted GCS problems for $\diHausdorffP(P,\delta)$ are strongly NP-hard.
\end{conjecture}

Also, note that we only prove several problem variants are NP-hard. The question of whether these problems are also in NP remains open.
It is interesting to note that for the problem of computing a minimum-link path inside a simple polygon, it has been shown that coordinates with exponential bit complexity are sometimes required~\cite{klps-ocmlpp-17}.
This suggests the problems we discuss may have a similar structure.

%


\section{Freespace-Based Algorithms for Fr\'echet Simplification} \label{sec:VRFD}
We use the free space diagram  between $P$ and its shortcut graph $G$ to solve the vertex-restricted GCS problem under the weak and strong Fr\'echet distances in $O(n^3)$ time and space.
%
This is related to {\em map-matching} \cite{aerw-mpm-03}, however in our case we need to compute {\em shortest} paths in the free space that correspond to {\em simple} paths in $G$. While map-matching for closed simple paths is NP-complete \cite{m-mmsc-13}, we exploit the DAG property of $G$ to develop efficient algorithms. 

\subsection{Shortcut DAG and Free Space Diagram}

For a given polygonal curve $P$, we define its 
\myemph{shortcut DAG}
$G=G(P)=(V,E)$, where $V=\lbrace 1, \ldots,n\rbrace$ and 
$E=\lbrace (u,v)~|~ 1 \leq u < v \leq n\rbrace$.
We consider each $v\in V$ to be embedded at
$p_v$ and each edge $e=(u,v)\in E$ to be embedded as a straight line
shortcut is linearly parameterized as $e(t)=(1-t)p_u+tp_v$ for $t\in[0,1]$.
We consider the parameter space of $G$ to be $E\times [0,1]$.

%
%
%
%
%

Now, let $\delta>0$, and consider the joint parameter space $[1,n]\times E\times [0,1]$ of $P$ and $G$. Any $(s,e,t)\in [1,n]\times E\times [0,1]$ is called \myemph{free}
if $\|P(s)-e(t)\|\leq \delta$, and the union of all free points is
referred to as the \myemph{free space}.
For brevity, we write $(s,e(t))$ instead of $(s,e,t)$, and if $e(t)=v\in V$ we write $(s,v)$.
The \myemph{free space diagram} $\mathemph{\FSD(P,G)}$ consists of all points in 
$[1,n]\times E\times [0,1]$ together with an annotation for each point whether it is free or not.
%
In the special case that the graph is a
polygonal curve $Q$ with $m$ vertices, then $\FSD(P,Q)$ consists of
$(n-1)\times (m-1)$ cells in the domain $[1,n]\times[1,m]$. A monotone
(continuous) path from $(1,1)$ to $(n,m)$ that lies entirely within the free space
corresponds to a pair of monotone re-parameterizations
$(\sigma,\theta)$ that witness $\Frechet(P, Q)\leq\delta$. Alt and
Godau showed that such a \myemph{reachable path} can be computed in
$O(mn)$ time by propagating reachable points across free space cell
boundaries in a dynamic programming
manner \cite{ag-cfdb-95}.
A path from  $(1,1)$ to $(n,m)$ that lies entirely within the free space
but that is not necessarily monotone corresponds to a pair of re-parameterizations
$(\sigma,\theta)$ that witness $\wFrechet(P, Q)\leq\delta$.
See \figref{fig:gcfs} for an example free
space diagram for two curves.

%
%
%
%
\begin{figure}[t]
	\begin{center}
		\includegraphics[width=10cm]{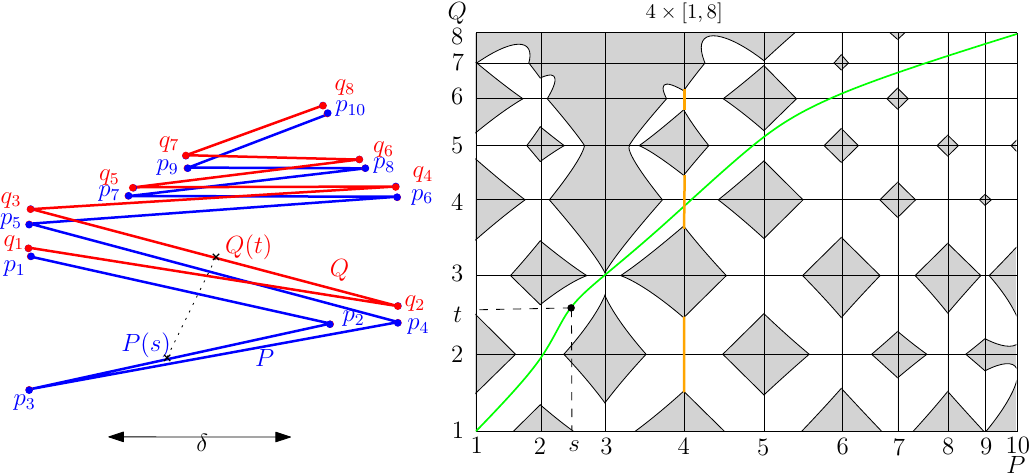}
		\caption{$(s,t)$ is a free point on a reachable path in $\FSD(P,Q)$, free space is shown in white.}
		\label{fig:gcfs}
	\end{center}
\end{figure}

The free space diagram $\FSD(P,G)$ consists of one \myemph{cell} for each edge in $P$ and each edge in $G$. The free space in such a cell is convex. The boundary of a cell consists of four line
\begin{wrapfigure}{r}{.3\textwidth}
	\centering
	\includegraphics[width=.30\textwidth]{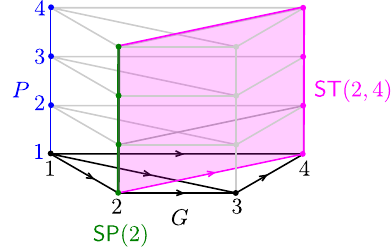}
\end{wrapfigure}
segments which each contain at most a single \myemph{free space interval}. 
We consider the free space diagram to be composed of spines and strips; see the figure on the right.
For any $v\in V$ and $e\in E$ we call $\mathemph{\Spine(v)}=[1,n] \times v$ a \myemph{spine} and $\mathemph{\Strip(e)}=[1,n] \times e\times [0,1]$ a \myemph{strip}.
And we denote the free space within spines and strips as
$\mathemph{\spine(v)}=\lbrace (s,v) ~|~ 1\leq s\leq n,\; ||P(s)-p_v||\leq\delta\rbrace$
and
$\mathemph{\strip(e)}=\lbrace (s,e(t)) ~|~ 1\leq s\leq n,\; 0\leq t\leq 1,\; ||P(s)-e(t)||\leq \delta\rbrace$, respectively.
For any edge $(u,v)\in E$, both spines centered at the vertices of the
edge are subsets of the strip:
$\Spine(u), \Spine(v)\subseteq \Strip(u,v)$, and $\Spine(u)$ is a subset of all strips with respect to edges incident on $u$.
%

\subsection{Weak Fr\'echet Distance $\wFrechet_{\V}(P,\delta)$ in Polynomial Time } \label {subsec:weakvertex}
In this section we provide an algorithm to solve the
vertex-restricted GCS problem under the weak Fr\'echet distance
in  $O(n^3)$ time and space. We will see that the optimal simplification $P'$ is a 
shortest simple path in $G$ that corresponds to a path $\P$ from $(1,1)$ to $(n,n)$ in $\FSD(P,G)$ that is contained in free space.

Let $P'=\wFrechet_V(P,\delta)$ be an optimal vertex-restricted simplification of $P$ and let $n'=\#P'$ be the number of vertices in $P'$.
Then $P'$ is a path in $G$, and $P'$
visits an increasing subsequence of vertices in $P$ (or $V$).
From the fact that $\wFrechet(P,P')\leq\delta$ we know that there is
a path $\P=(\sigma,\theta)$ from $(1,1)$ to $(n,n')$ in $\FSD(P,P')$ that lies entirely within free space.
And since $\FSD(P,P')$ is a subset of $\FSD(P,G)$,  the path $\P=(\sigma,\theta)$ is also a path in $\FSD(P,G)$.
Here, $\sigma$ is a reparameterization of $P$, and $\theta$ is a
reparameterization of a {\em simple} path $P'=\langle p_{i_1},
p_{i_2}, \ldots, p_{i_k}\rangle$ in $G$ with $i_1=1$ and $i_k=n$. We
call $(s,d)$ in $\FSD(P,G)$ \myemph{weakly reachable} if there
exists a path $\P=(\sigma,\theta)$ from $(1,1)$ to $(s,d)$ in
$\FSD(P,G)$ that lies in free space such that $\theta$ is a
reparameterization of a simple path from $p_{1}$ to some point on an
edge in $G$. We denote the number of vertices in this simple path by
$\# \P$, and we call $\P$ \myemph{weakly reachable}.
We define the cost function $\phi: [1,n] \times V\rightarrow \Nats$
as $\mathemph{\phi(z,v)}=\min_{\P} \# \P$, where the minimum ranges over all weakly reachable paths to $(z,v)$ in the free space diagram.  If no such path exists then $\phi(z,v)=\infty$.
Note that all points in a free space interval (on the boundary of a free space cell) have the same $\phi$-value.

\begin{observation}
	There is a weakly reachable path $\P$ in $\FSD(P,G)$ from $(1,1)$ to $(n,n)$
	with $\#\P=\#\wFrechet_\V(P,\delta)$
	if and only if
	$\phi(n,n)=\#\wFrechet_\V(P,\delta)$.
	
\end{observation}

Since $\phi(n,n)=\#\wFrechet_\V(P,\delta)$ is the length of a shortest
simple path $P'$ in $G'$ that corresponds to a weakly reachable path $\P=(\sigma,\theta)$ in
$\FSD(P,G)$, we can compute  $\phi(n,n)$ by propagating $\phi$-values across free space
intervals in a breadth-first manner. However,
we need to construct both $P'$ and $\P$ during our algorithm.
As opposed to generic map-matching algorithms for the weak Fr\'echet distance \cite{bpsw-mmvtd-05}, we need to ensure that $P'$ is simple.

Since $\phi$ is the length of a shortest path, it seems as if one could compute it by simply using a breadth-first propagation. However, one has to be careful because a weakly reachable path $\P$ is only allowed to backtrack {\em along the path in $G$ that it has already traversed}.
We therefore carefully combine two breadth-first propagations to compute the $\phi$ values for all $I\in{\cal I}$, where $\cal{I}$ is the set of all (non-empty) free space intervals on all spines $\Spine(v)$ for all $v\in V$.
For the primary breadth-first propagation, we initialize a queue $Q$ by enqueuing the interval $I\subseteq\spine(1)$ that contains $(1,1)$.
Once an interval has been enqueued it is considered {\em visited}, and it can never become unvisited again.
%
Then we repeatedly extract the next interval $I$ from $Q$. Assume $I\subseteq\spine(u)$. For each $v$ from $u+1$ to $n$ we consider $\Strip(u,v)$ and we compute all unvisited intervals $J\subseteq\spine(u)\cup\spine(v)$ that are reachable from $I$ with a path in $\strip(u,v)$. These $J$ can be reached using one more vertex, therefore we set $\phi(J)=\phi(I)+1$, we insert $J$ into $Q$, and we store the predecessor $\pi(J)=I$.
For each $J\in\spine(u)$ we then launch a secondary breadth-first traversal to propagate $\phi(J)$ to all unvisited intervals $J'\in{\cal I}$ that are reachable from $J$ within the free space of $\FSD(P,G(\pi(J)))$. Here, $G(\pi(J))$ denotes the projection of the predecessor DAG rooted at $\pi(J)$ onto $G$, i.e., each interval $I$ in the predecessor DAG is projected to $u$ if $I\subseteq\spine(u)$.
%
%
This allows $\P$ to backtrack along the path in $G$ that it has already traversed, {\em without} increasing $\phi$.
This secondary breadth-first traversal uses a separate queue $Q'$,
and sets $\phi(J')=\phi(J)$ and $\pi(J')=J$.
%
%
When this secondary traversal is finished, $Q'$ is prepended to $Q$, and then the primary breadth-first propagation continues.
Once $Q$ is empty, i.e., all intervals have been visited, $\phi(I)=\#\wFrechet_\V(P,\delta)$, where $I\subseteq\spine(n)$ is the interval that contains $(n,n)$. Backtracking a path from $n$ to $1$ in the predecessor DAG  $\pi(I)$ yields the simplified curve $P'$.
This algorithm visits each interval in ${\cal I}$ once using nested breadth-first traversals. Since there are $O(n^3)$ free space intervals this takes $O(n^3)$ time and space.

\begin{theorem}\label{thm:VRWFD}
	We can solve the vertex-restricted GCS problem under the weak Fr\'echet distance
	in $O(n^3)$ time and space.
\end{theorem}

\subsection{Fr\'echet Simplification $\Frechet_\V(P,\delta)$ in $O(n^3)$ Time}
\label{subsec:FrechetDP}

In this section we provide an algorithm to
solve the
vertex-restricted GCS problem under the Fr\'echet distance. 
%

%

\paragraph*{Elementary Intervals and Cost Function $\Phi$}
%
%
%
\begin{figure}[htbp]
	\centering
	\includegraphics[width=.5\textwidth]{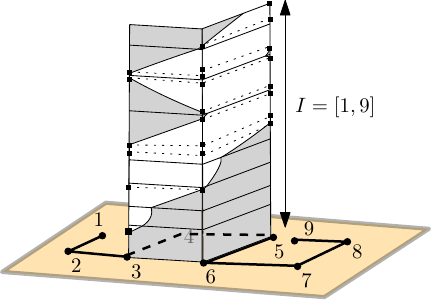}
	\caption{Elementary intervals on each spine are created by subdividing free space intervals with all endpoints of free space intervals on all other spines.}
	\label{fig:EI}
\end{figure}  
Let $S\subseteq [1,n]$ be the set of all interval endpoints of
all free space intervals of $\spine(v)\subseteq \Spine(v)=[1,n]\times v$
for all $v\in V$, projected onto
$[1,n]$. The set $S$ induces a partition
of $[1,n]$ into intervals;
these are half-open $[a,b)$, except for the last interval $[a,b]$
which is closed. For each $v\in V$, let $\mathemph{\slist(v)}$ be the ordered
list of \myemph{elementary intervals} obtained by subdividing the
intervals of $\spine(v)$ according to this partition; see
Figrue~\ref{fig:EI}.
%
We assume that elementary intervals in $\slist(v)$ are ordered in increasing order of their starting point. Projecting all $\slist(v)$ to $[1,n]$ induces a total order on all elementary intervals; we use $<$ and $=$ to compare intervals.



We define the cost function $\phi: [1,n] \times V\rightarrow \Nats$ as
$\mathemph{\phi(z,v)}=\min_{\P} \# \P$, where the minimum ranges over
all reachable paths from $(1,1)$ to $(z,v)$ in the free space
diagram. Here, $\P$ is called \myemph{reachable} if it starts in $(1,1)$, lies entirely
in free space, and is monotone in $P$ and on each edge in $E$.
If no such path exists then $\phi(z,v)=\infty$.
The function $\phi$ is defined on all spines $\spine(v)$ for all $v\in
V$, and it captures min-link reachability in the free space diagram.
We will propagate $\phi$ across
the free space diagram using dynamic programming.
%

\begin{lemma}[Properties of elementary intervals] \label{lem:EIProp}
	Let $v\in V$. Then:\\
	\begin{enumerate*}
		\item $|\slist(v)|\leq 2n^2+1$, and \label{prop:size}
		\item $\phi(a,v) = \phi(b,v)$ for all $a,b\in e\in \slist(v)$.  \label{prop:equalcost}
		%
	\end{enumerate*}
\end{lemma}

\begin{proof} 
	\ref{prop:size}.  There are at most $n$ free space intervals on each
	spine, and there are $n$ spines, therefore there are $|S|=2n^2$
	interval endpoints. These subdivide $[1,n]$ into $2n^2+1$ intervals.

	\ref{prop:equalcost}. For the sake of contradiction, assume there
	exist $a,b\in e$, with $a\neq b$ such that $\phi(a,v) \neq \phi(b,v)$. Assume without loss of generality that
	$\phi(a,v)<\phi(b,v)$. Let $\P(a,v)$ be a reachable path to $(a,v)$ with
	$\#\P(a,v)= \phi(a,v)$.  (i) If $b>a$, then extending $\P(a,v)$ to
	continue vertically from $a$ to $b$ along $e$ yields a reachable path
	$\P(b,v)$ to $(b,v)$ with $\#\P(b,v)=\#\P(a,v)$; see
	\figref{fig:propProof}(a). Therefore $\phi(b,v)\leq\phi(a,v)$ which is
	a contradiction.
	
	(ii) Now, assume that $a>b$.  Let $s$ be the start point of $e$, and
	let $p$ be the last point on $\P(a,v)$ such that $p=(j,s)$ for some
	$j$.  Then $p$ lies in some strip $\strip(u,v)$.  Since $p,(a,v),(b,v)$ all
	lie in free space and $e$ is a subset of a free space interval of a
	single cell, $p,(a,v),(b,v)$ all lie in a single free space cell.  Hence we
	can extend $\P(p)$ with a line segment to $(b,v)$, which yields a
	reachable path $\P(b,v)$ with $\#\P(b,v)=\#\P(a,v)$; see
	\figref{fig:propProof}(b). Therefore $\phi(b,v)\leq\phi(a,v)$ which is
	a contradiction.
	\begin{figure}
		\centering
		\includegraphics[width=.6\linewidth]{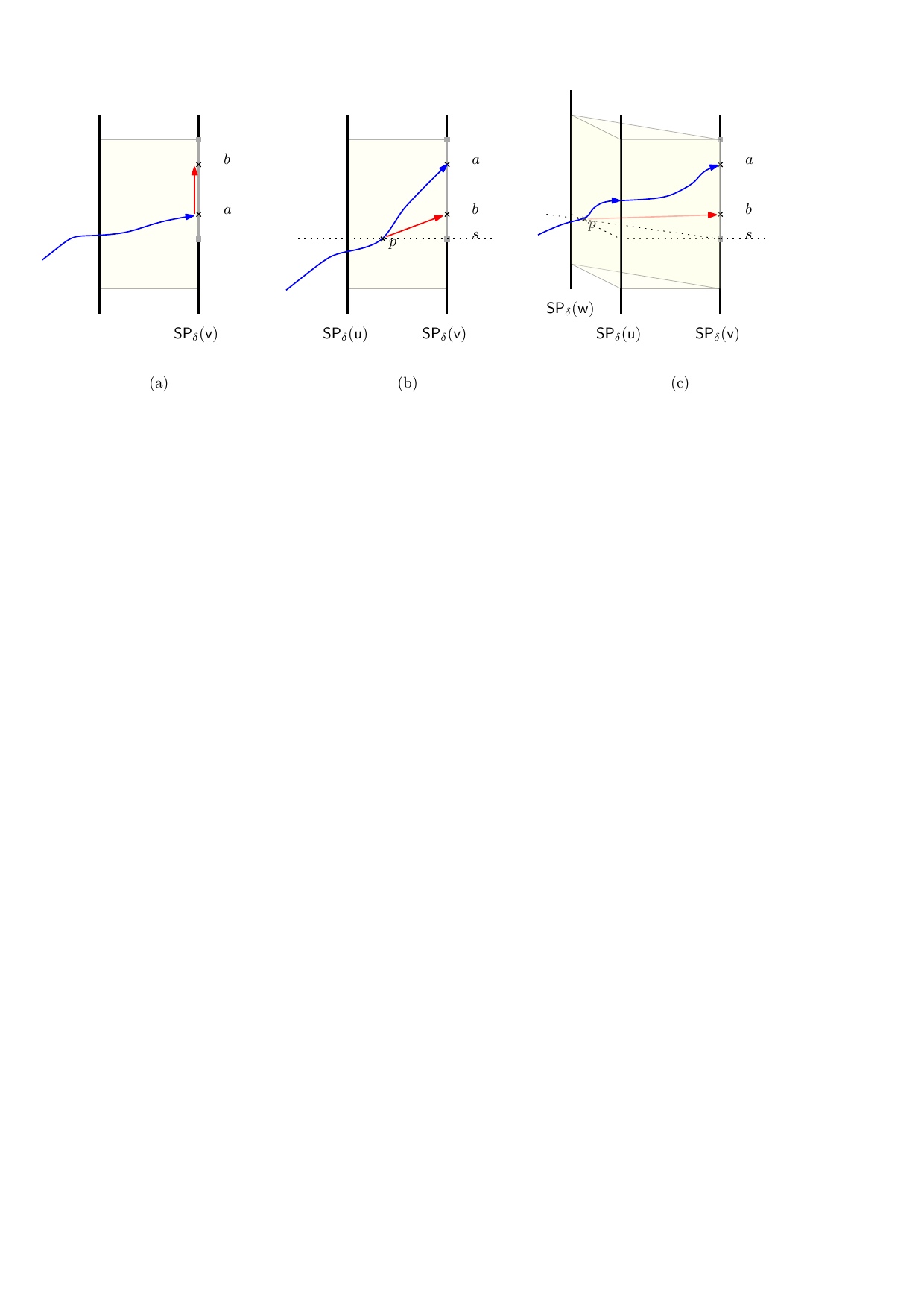}
		\caption{Illustration of the proof of property \ref{prop:equalcost}. $a$ and $b$ belong to the same elementary interval $e$ that is highlighted in gray. (a) $b>a$ (b) if $a>b$.}
		\label{fig:propProof}
	\end{figure}
\end{proof}

By \lemref{lem:EIProp}, $\phi$ is constant on each elementary interval $e \in \slist(v)$.
For brevity we write $\phi(e,v)$ and we notate $\P(e,v)$ as a reachable path to $e$.
\lemref{lem:DPFormula} below shows the recursive formula for $\phi$ which we will use in our dynamic programming algorithm.

\begin{lemma}[Recursive formula]
	\label{lem:DPFormula}
	$\;$\vspace*{-1.5ex}  
	\begin{enumerate}
		\item For all $e\in \slist(1)$: If $e$ is
		reachable then $\phi(e,1)=0$, otherwise $\phi(e,1)=\infty$.
		\item For all $v\in\{2,\ldots,n\}$ and $e\in \slist(v)$:
		$\phi(e,v) = \min\limits_{1\leq u< v}\,\min\limits_{\substack{e'\in \slist(u)\\ e'\leq e}}\phi(e',u) + 1$ ,\\ where the
		second minimum is taken over all $e'$ such that
		there exists a reachable path from $e'$ to $e$ within $\strip(u,v)$.
	\end{enumerate}
\end{lemma}
\begin{proof} 
	(1) If $e$ is reachable, then we know that $\phi(e,1)=\#\P(e,1)=0$ because $e$ belongs to the first spine $\slist(1)$. If $e$ there is no reachable path to $e$ then $\phi(e,1)=\infty$.
	
	(2) By definition and Lemma~\ref{lem:EIProp}, $\phi(e,v)=\min_{\P(e,v)} \# \P(e,v)$, where the minimum ranges over all reachable paths to $(e,v)$ in the free space diagram, and reachable paths are monotone in $P$ and on each edge in $E$.
	
	Let $\P^*(e,v)$ be a reachable path to $(e,v)$ such that $\#\P^*(e,v)=\phi(e,v)$.
	Now consider the last spine $\Spine(u)$ that $\P^*(e,v)$ visits before
	visiting $\Spine(v)$, and let $e'\in\slist(u)$ be the last elementary
	interval it visits.  Let $\P_1$ be the sub-path of $\P^*(e,v)$ that
	ends in $e'$ on $\Spine(u)$, and let $\P_2$ be the remaining portion
	of $\P^*(e,v)$ that starts in $e'$ on $\Spine(u)$ and ends in $e$ on
	$\Spine(v)$. Then $\P_2$ is a reachable path from $(e',u)$ to $(e,v)$,
	and due to monotonicity in $E$, $\P_2$ has to lie in $\Strip(u,v)$. Therefore
	$\#\P_2=1$ and $\#\P^*(v,e) = \#\P_1+\#\P_2=\#\P_1+1$ by
	construction. And we know that $\#\P_1$ has to be the minimum number
	of links to reach $(e',u)$, because otherwise $\# \P(e,v)$ would not
	be optimal. Hence, $\phi(e',u)=\#\P_1$ and we have
	$\phi(e,v)=\phi(e',u)+1$.

	Since edges in $E$ are directed, any reachable path to $(e,v)$ can only visit spines $\Spine(u)$ for $u\in \{1,\ldots,v\}$. And any such path has to be monotone in $P$ and therefore can only visit elementary intervals $e'<e$.
	Thus,  $\phi(e,v) = \min\limits_{1\leq u< v}\,\min\limits_{\substack{e'\in \slist(u)\\ e'\leq e}}\phi(e',u) + 1$,
	where only those $e'$ are considered for which there is a reachable path from $e'$ to $e$  in $\Strip(u,v)$.
\end{proof} 

\paragraph*{DP Algorithm}

%
First we compute the shortcut DAG
$G$ and the free space diagram $\FSD(P, G)$. This can be done
by computing free space diagrams for all strips (and spines) and
connecting them together with respect to the adjacency information in
$G$.  For every $v\in V$ we then compute the list $\slist(v)$ of all
elementary intervals in $\Spine(v)$, and we initialize $\phi(v,e)$
according to Lemma~\ref{lem:DPFormula}.
%
%
The algorithm
processes the free space diagram spine by spine
for $v\in\{2,\ldots,n\}$.

For fixed $v$ we
compute $\phi(e,v)$ for all $v\in\slist(v)$
by propagating reachability
according to the recursive formula in
Lemma~\ref{lem:DPFormula} using the following batching
approach; refer to Figure~\ref{fig:Subdivision}
for an illustration:
%
%
%
\begin{figure}[htbp]
	\centering
	\includegraphics[width=.9\textwidth]{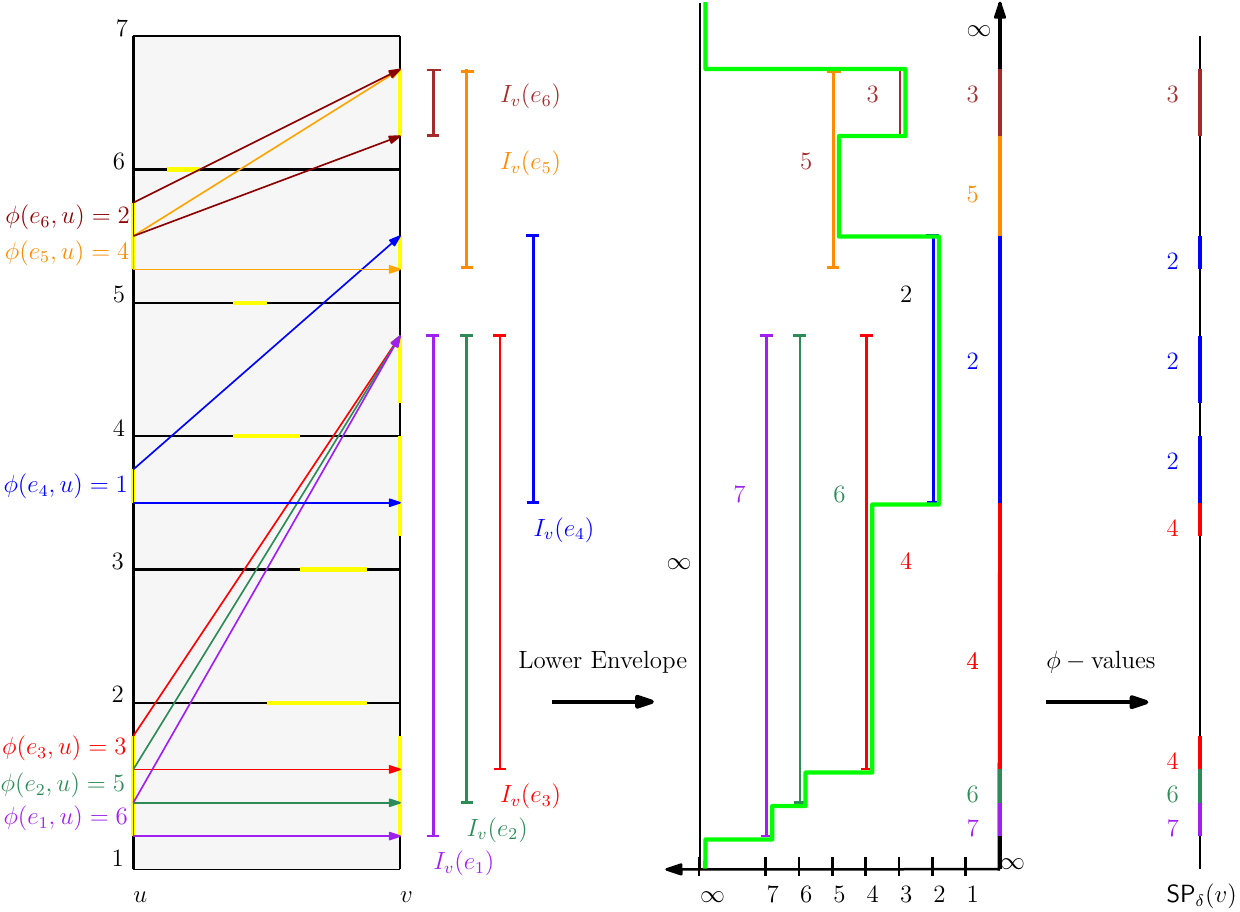}
	\caption{Propagating $\phi$ within $\strip(u,v)$.
		%
		(a) shows reachable intervals on $\Spine(v)$ computed for elementary intervals in $\slist(u)$.
		(b) shows the computation of the lower envelope $\L_u$ of all all reachable
		intervals $I_v(e')$ for all $e'\in\slist(u)$.
		%
		(c) shows the resulting $\phi$-values.}
	\label{fig:Subdivision}
\end{figure}
For all $u<v$ and all elementary intervals
$e'\in\slist(u)$ we compute the
\myemph{reachable interval} $\mathemph{I_v(e')}\subseteq\Spine(v)$
which is the smallest (continuous) interval in $\Spine(v)$ that includes all points
$z\in\Spine_\delta(v)$ that are reachable from $e'$ within $\strip(u,v)$. Note that $I_v(e')$
generally includes points from multiple elementary intervals as well as
non-reachable points. For all
points $z\in I_v(e')$ that can be reached with a path that visits $e'$ and
traverses $\strip(u,v)$,
we know that $\phi(z,v)\geq \phi(e',u)+1$. We therefore associate the
cost $\mathemph{\psi(I_v(e'))}=\phi(e',u)+1$ with the reachable
interval $I_v(e')$.
Thus, $\psi(I_v(e'))$ is a partially defined constant function on
$\Spine(v)$, the graph of which is a line segment $I_v(e')\times\psi(I_v(e'))$.
For each $u<v$, we compute the
\myemph{lower envelope} $\mathemph{\L_u}:\Spine(v)\rightarrow\Reals$ of all line segments $I_v(e')\times\psi(I_v(e'))$ for
all $e'\in\slist(u)$, which is the  pointwise minimum of all
these partially defined constant functions. 
%
%
%
%
%
For each elementary interval $e\in\slist(v)$, we have that $\L_u$ is constant
by construction, and we compute $\phi(e,v) = \min_{1\leq u<v}\L_u(e)$,
which follows from the recursive formula in Lemma~\ref{lem:DPFormula}. 
%
After having processed all $v\in \{2,\ldots,n\}$, all $\phi(e,v)$ have been computed.
We then construct a reachable path $\P((1,1),(n,n))$ by backtracking the $\phi$-values from $\phi(n,n)$.
The sequence of traversed strips corresponds to the desired
path $P'$ in $G$.
%

There are a total of $O(n^3)$ elementary intervals.
%
%
For fixed $u<v$, 
using Lemma 3 of~\cite{aerw-mpm-03} we can compute all reachable intervals
within a single strip $\Strip(u,v)$ in $O(|\slist(u)|)=O(n^2)$ time.
Each lower envelope $\L_u$ can be
computed in time linear in the number of intervals, hence $O(n^2)$ time.
Computing $\phi(e,v) = \min_{1\leq u<v}\L_u(e)$ for all  $e\in\slist(v)$ takes $O(n^3)$ time,
thus the total runtime is $O(n^4)$ using
$O(n^3)$ space.
%

\begin{theorem}\label{thm:VRFD}
	Given a polygonal curve $P$ with $n$ vertices and $\delta>0$, an optimal solution to the vertex-restricted GCS problem under Fr\'echet distance can be computed in $O(n^4)$ time and $O(n^3)$ space.
\end{theorem}

\begin{algorithm}[htbp]
	\DontPrintSemicolon
	\SetKwFunction{MinLinkSimp}{\textsc{MinLinkSimp{$(P,\delta$)}}}
	\SetKwFunction{Subdivide}{\textsc{Subdivide{}}}
	\SetKwFunction{RIf}{\textsc{ReachInterval{}}}
	\SetKwComment{Comment}{}{}
	\SetKwInOut{Input}{Input}\SetKwInOut{Output}{Output}
	\SetKwFor{ForAll}{for all}{:}{}
	\SetKwFor{For}{for}{:}{}
	\caption{Algorithm for vertex-restricted GCS problem for $\Frechet_\V(P,\delta)$}
	\label{alg:VRFD}
	
	\Input{Polygonal curve $P=\langle p_1,p_2,\cdots, p_n \rangle$, and $\delta>0$}
	\Output{Optimal vertex-restricted simplification $P'$ of $P$ such that $\Frechet(P,P')\leq\delta$}
	
	Compute shortcut DAG $G=(V=\{1,\ldots,n\},E)$\;\label{step:DAG}
	Compute $\FSD(P, G)$\;\label{step:FSD} 
	\ForAll{$v\in V$}{
		Compute the list of elementary intervals $\slist(V)$\;
		\lForAll{$e\in \slist(v)$}{$\phi(v,e)=\infty$\label{step:intzinf}
			\Comment*[h]{\rm\em\color{lgray} // Initialize $\phi$}}}
	
	\lForAll{$e \in \slist(1)$ that is reachable from $(1,1)$}{$\phi(1,e)=0$ \label{step:intzzero}
		\Comment*[h]{\rm\em\color{lgray} // Initialize $\phi$}}
	
	\For{$v=2$ \KwTo{$n$} \label{step:prpg}}{
		\ForAll{$u\in \{1,\cdots, v-1\}$}{ \label{step:allU}
			$S = \emptyset$ \label{step:empty}\;  
			\ForAll{$e'\in \slist(u)$}{ \label{step:iterateElementaryStart}
				Compute the reachable interval $I_v(e')$ and set $\psi(I_v(e')) = \phi(e',u)+1$\;
				$S = S\cup\{I_v(e')\}$ \label{step:union_RI}
			}
			Compute the lower envelope $\L_u$ of all $I\times \psi(I)$ for all $I\in S$
		}
		
		\lForAll{$e\in\slist(v)$}{$\phi(e,v) = \min_{1\leq u<v}\L_u(e)$\label{step:lowerenvAssign}}
	}

	Return vertices of $P'$ by tracing back the $\phi$ values.   \label{step:completion}
\end{algorithm}

\begin{proof} 
	The main task is to compute $\phi(n,n)$. The shortcut DAG can be
	computed straight-forwardly in $O(n^2)$ time and space, and the free space diagram
	requires computing all spines and strips. There are $O(n^2)$ strips
	and it takes $O(n)$ time and space to compute the free space diagram
	for each strip, for a total of $O(n^3)$ time and $O(n^2)$ space. By
	Lemma~\ref{lem:EIProp} there are $O(n^2)$ elementary intervals per
	spine and thus $O(n^3)$ elementary intervals total, and thus all the
	$\slist(v)$ can be computed in $O(n^3)$ time.
	The $\phi$-value of each of the $O(n^3)$ elementary intervals
	is initialized in lines \ref{step:intzinf} and \ref{step:intzzero} in overall $O(n^3)$ time.
	
	
	Lines~\ref{step:prpg}-\ref{step:lowerenvAssign} compute all
	$\phi(e,v)$ for all $e\in\slist(v)$ by batching the reachability
	propagations according to the recursive formula in
	Lemma~\ref{lem:DPFormula}. Lemma 3 of~\cite{aerw-mpm-03} shows that
	within a single strip $\Strip(u,v)$, all reachable intervals for all
	$O(n)$ {\em free space intervals} can be computed in $O(n)$ time. In
	our case, for fixed $u<v$, we have $O(n^2)$ elementary intervals in
	$\slist(u)$ (which are subdivisions of the free space intervals), and
	we can apply this lemma to compute all their reachable intervals in
	$O(n^2)$ time.
	We assign to each reachable interval $I_v(e)$ the cost $\psi$ that it
	takes to reach $e$ with a min-link path through $e'$.
	For fixed $v$ and $u$, lines
	\ref{step:iterateElementaryStart}-\ref{step:union_RI} take $O(n^2)$ time.
	Processing elementary intervals $e\in\slist(u)$ in
	increasing order, yields reachable intervals in increasing
	order of their start points. This allows us to compute the lower envelope $\L_u:\spine(v)\rightarrow\Reals$
	in linear time in the number of intervals (see Section 5
	of~\cite{aaghi-vdp-85}), hence in $O(n^2)$ time.
	By construction, $\L_u$ is constant on each  $e\in\slist(v)$ . 
	From the recursive formula in Lemma~\ref{lem:DPFormula}, 
	follows that $\phi(e,v) = \min_{1\leq u<v}\L_u(e)$, which is computed in $O(n^3)$ time in line \ref{step:lowerenvAssign}.
	%
	Therefore the entire runtime is $O(n^2\cdot n^2+n\cdot n^3)=O(n^4)$, and
	the overall space needed is $O(n^3)$.
	A min-link reachable path in the free space diagram can be traced back from $\phi(n,n)$ in $O(n^2)$ time, and
	the sequences of spines visited correponds to the simplification $P'$.
\end{proof}

\paragraph*{Improved DP Algorithm}
We now describe an improvement of the algorithm to $O(n^3)$ time.
%
The runtime bottleneck of the algorithm is that, for fixed $v\in V$, the propagation step
which computes all $\phi(e,v)$ for all $e\in\slist(v)$, takes $O(n^3)$ time, even though there are only
$|\slist(v)|=O(n^2)$ elementary intervals.
It turns out that the $\psi$-values of the reachable intervals have a special structure that
we can exploit to speed up this propagation to take only $O(n^2)$ time.

For this we need to consider portions of strips and spines:
Let $u<v$.
We define $\mathemph{\Strip^{[i,j]}(u,v)}=[u,v]\times[i,j]$ and $\mathemph{\Spine^{[i,j]}(u)}=u\times[i,j]$.
And let $\mathemph{\slist^{[i,j]}(u)}=\slist(u)\cap\Spine^{[i,j]}(u)$.
%
For each elementary interval $e\in\slist^{[i,j]}(u)$, we break the reachable
interval $I_v(e)\subseteq\Spine(v)$ into the \myemph{upper reachable interval}
$\mathemph{I^\top_v(e)}=I_v(e)\cap \Spine^{[i+1,n]}(v)$
and the \myemph{lower reachable interval}
$\mathemph{I^\bot_v(e)}=I_v(e)\cap \Spine^{[i,i+1]}(v)$.
%

\begin{observation}
	\label{obs:identicalTop}
	Let $e,e'\in \slist^{[i,i+1]}(u)$. Then:
	\vspace*{-1ex}
	\begin{itemize}
		\item $I^\top_v(e)$ and $I^\top_v(e')$ are identical.
		\item $I^\bot_v(e)=[a,i]$, where $a$ is the maximum of the bottom endpoints
		of $e$ and the free space interval on $\spine^{[i,i+1]}(u)$, or
		$\infty$ if the free space interval is empty.
	\end{itemize}
\end{observation}
This means that all elementary intervals in $\slist^{[i,i+1]}(u)$  generate the same upper reachable interval, and lower reachable intervals depend on bottom endpoints only.
%
%
\begin{lemma} [Monotonicity of $\Phi$] \label{lem:decreasingValues}
	Let $e_1<\cdots < e_m$ be all elementary
	intervals in $\slist^{[i,i+1]}(u)$.
	Then $\phi(e_m,u)\leq \ldots \leq \phi(e_2,u)\leq \phi(e_1,u)$.
\end{lemma}
\begin{proof}
	We use proof by contradiction. Let $j$ and $k$ be two integers such that $1\leq j<k\leq m$, $e_j < e_k$ and $\phi(e_k,v) > \phi(e_j,v)$. Then a reachable path $\P(e_j,1)$ of length $\#\P(e_j,1)=\phi(j,v)$ is also reaching $e_k$ since all $e_1,\cdots , e_m$ are within free space interval of $[i,i+1]\cap\spine(v)$. Therefore, $\#\P(e_k,1)= \#\P(e_j,1) = \phi(e_j,v)$ and we have $\phi(e_k,v)\leq \phi(e_j,v)$ which is a contradiction.
\end{proof}

%
%
%
Now let $\mathemph{\mu^{[i,i+1]}(v)}$
be the minimum $\phi$-value
in $\slist^{[i,i+1]}(v)$.
And for each $e\in\slist^{[i,i+1]}(v)$ let
\[\mathemph{\bar{\phi}(e,v)}=\min_{1\leq u<v}\min_{\substack{e'\in \slist^{[i,i+1]}(u)\\ e'\leq e}} \phi(e',u) + 1
= \min_{\substack{e'\in \slist^{[i,i+1]}(v-1)\\ e'\leq e}} \bar{\phi}(e',v-1) + 1 \;,\]
where only those $e'$ are considered for
which there exists a reachable path from $e'$ to $e$ within
$\strip^{[i,i+1]}(u,v)$.
The function $\bar{\phi}$ propagates horizontal reachability within $[i,i+1]$ only.
%

%
The propagation now proceeds in two steps, for fixed $v$:
\begin{enumerate}
	\item For all $u<v$ and all $1\leq i<n$ we compute the upper interval $I^\top_v(e')$
	for any fixed $e'\in\slist^{[i,j]}(u)$ using Lemma 3 of~\cite{aerw-mpm-03}, and we know that $\psi(I^\top_v(e'))=\mu^{[i,i+1]}(u)+1$.
	This generates $O(n^2)$ upper reachable intervals. We compute
	their lower envelope $\L^1$.
	\item We propagate lower reachable intervals \myemph{only from $u=v-1$ using $\bar{\phi}$}:
	For all $e'\in\slist(v-1)$ we compute the lower interval $I^\bot_v(e')$. This can be done in constant time per interval, and it takes $O(|\slist(v-1)|)=O(n^2)$ time. We set  $\psi(I^\top_v(e'))=\bar{\phi}(e')+1$. We compute their lower envelope $\L^2$.
	Finally we need to update $\phi$ using the lower envelope of $\L^1$ and $\L^2$ and we also update $\bar{\phi}$ using $\L^2$ only.
\end{enumerate}
Thus the runtime for an update step is $O(n^2)$ and the total runtime is $O(n^3)$.

\begin{theorem}\label{thm:VRFD_improved}
	Given a polygonal curve $P$ with $n$ vertices and $\delta>0$, an optimal solution to the vertex-restricted GCS problem under Fr\'echet distance can be computed in $O(n^3)$ time and $O(n^3)$ space.
\end{theorem}

We note that we can save space by only storing $\mu^{[i,i+1]}(v)$ for all $v\in V$
and all $1\leq i\leq n$, and storing $\phi$ and $\bar{\phi}$ on strips
$\Strip(v-1)$ and $\Strip(v)$ only, for a total of $O(n^2)$ space. In
order to trace back a reachable path, however, one then needs to apply
a variant of Hirschberg's trick, which results in a runtime of $O(n^3\log n)$ space.

\section{Vertex-Restricted GCS under the Hausdorff Distance from $P'$ to $P$} \label{sec:VRHausP'toP}
In this section we revisit the GCS problem for $\diHausdorff_{\V}(P,\delta)$ considered by Kreveld \etal~\cite{klw-oopsihfd-18}. We improve on the running time of their $O(n^4)$ time algorithm. 
First we thicken the input curve $P$ by width $\delta$. This induces a polygon $\P$ with $h=O(n^2)$ holes. Now all we need is to decide whether each shortcut $\segment{p_ip_j}$ for all $1\leq i< j\leq n$ lies entirely within $\P$ or not. To this end, we proprocess $\P$ into a data structure such that for any straight line query ray $\rho$ originated from some point inside the $\P$, compute the first point on the boundary of $\P$ hit by $\rho$. In other words we have a collection of $h$ simple polygons of total complexity of $N=O(n^2)$. We use the data structure proposed by~\cite{cegghss-rspugt-94} of size $O(N)$ which can be constructed in time $O(N\sqrt{h}+h^{3/2}\log h+ N\log N)$ and which answers queries in $O(\sqrt{h}\log N)$ time. We have $\Theta(n^2)$ shortcuts $\segment{p_ip_j}$ to process and need to examine whether each shortcut lies inside $\P$ or not i.e., whether there is a simple polygon (hole in $P$) among that is hit by $\rho$. If a shortcut lies inside $\P$ then we store it into the edge set of the shortcut graph proposed by Imai-Iri~\cite{ii-aoaaplf-86}. Otherwise we eliminate the shortcut. We originate a ray at $p_i$ and compute the first point $x$ on the boundary of $\P$ hit by the ray in $O(\sqrt{h}\log N)$ query time. All we need is to compare the length of the ray to the length of the shortcut. If $\|p_i-x\| \geq \|p_i-p_j\|$ then the shortcut lies inside $\P$, otherwise it does not. Once the edge set of the shortcut graph is constructed, we compute the shortest path in it. As a result we have the following theorem:

\begin{theorem} \label{thm:VRHP'toP}
	Given a polygonal curve $P$ with $n$ vertices and $\delta>0$, an optimal solution to the vertex-restricted GCS problem for ${\diHausdorff}_{\V}(P,\delta)$ can be computed in $O(n^3\log n)$ time and $O(n^2)$ space.
\end{theorem}

\section{NP-Hardness Template for Curve-restricted GCS} \label{sec:CR}

In this section we construct a template that we use to prove NP-hardness of the curve-restricted GCS problems for most of the distance measures discussed in this paper. The template takes inspiration from the NP-hardness proofs of minimum-link path problems~\cite{klps-ocmlpp-17}. We believe that this template can be adapted to show hardness of other similar problems.

The template reduces from the \textsc{Subset Sum} problem.
Given a set of $m$ integers $A=\{a_{1},a_{2},\dots,a_{m}\}$ and an integer $M$, we will construct an instance of the curve-restricted GCS problem such that there exists a subset $B\subset A$ with the total sum of its integers equal to $M$ if and only if there exists a simplified polygonal curve with at most $2m+1$ vertices.

\subsection{Overview}
The input curve $P$ we construct has a zig-zag pattern.  It has $m$ \emph{split gadgets} at every other bend of the pattern, $m+1$ \emph{enumeration gadgets} at the other bends, and $2m$ \emph{pinhole gadgets} halfway through each zig-zag segment (refer to Figure~\ref{fig:sketch}). The split and the enumeration gadgets are the same for all the distance measures, and only the pinhole gadgets vary. 
The construction forces any optimal simplification \(P'\) to follow a zig-zag pattern. The pinhole gadget is named as such because any segment of \(P'\) that goes through it is forced to pass through a specific point, called the \emph{pinhole}. This limits the placements of \(P'\)'s vertices. 
The choice of where to place the vertex on each split gadget corresponds to the choice of including or excluding a given integer in the subset \(B\). The $x$-coordinate of the vertex of \(P'\) on each enumeration gadget encodes the sum of integers in \(B\) up to that point. We will ensure that the last point of \(P\) is reachable using at most \(2m+1\) vertices only if \(B\) sums to exactly \(M\).
Different distance measures require different pinhole gadgets to make this construction work.
For the reduction to be successful the following properties must hold for the construction:

\begin{enumerate}
	\item Any segment of \(P'\) starting before a pinhole gadget and ending after the pinhole gadget must pass through the pinhole gadget's \emph{pinhole}.
	\item It must be impossible to have a segment of \(P'\) traverse multiple pinhole gadgets at once.
	\item Any segment of \(P'\) where the starting vertex \(u\) is on a split or enumeration gadget, the segment goes through a pinhole, and the ending vertex \(v\) is on the next enumeration or split gadget, must have distance \(\leq \delta\) to \(P[u,v]\).
	\item \(P\) must be polynomial in size. Specifically, only a polynomial number of polyline segments can be used and all vertices must have rational coordinates.
\end{enumerate}
In Section~\ref{sec:proofofproperties} we will show that a construction with these properties implies NP-hardness. 

\begin{figure}[t]
	\centering
	\includegraphics{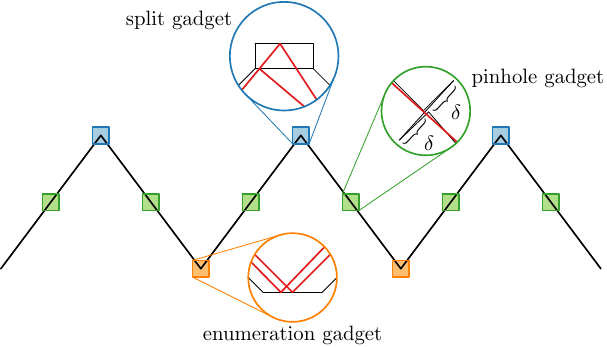}
	\caption{Sketch of our template curve}
	\label{fig:sketch}
\end{figure}

\subsection{Exact construction}
Exact coordinates for the construction are based on two constants \(\delta\) and \(\gamma\). The value \(\delta\) is the bound on the allowed distance between a simplification and the original curve, and \(\gamma\) is a constant that is significantly larger than \(\delta\). For our construction, we set \(\delta\) to be \(4\sum a_i\), and \(\gamma\) to be \(\delta^2\).

\subparagraph{Split gadget} The split gadgets, denoted by $\sigma^i$ (for $1\le i \le m$), consist of a chain of five segments (refer to Figure~\ref{fig:splitgadget}). Their vertex coordinates are as follows (with vertex index denoted by subscript):
\[
\begin{aligned}
\sigma^i_1,\sigma^i_5 & = \left(\frac{3(2i-1)}{4}\gamma,\gamma \right)\,,\\
\sigma^i_2,\sigma^i_6 & = \left(\frac{3(2i-1)}{4}\gamma+w, \gamma \right)\,,\\
\sigma^i_3 & = \left(\frac{3(2i-1)}{4}\gamma+w, \gamma+h_i \right)\,,\\
\sigma^i_4 & = \left(\frac{3(2i-1)}{4}\gamma, \gamma+h_i \right)\,,
\end{aligned}
\]
where \(h_i=\frac{2\gamma a_i}{3\gamma-4a_i}\) for \(i \in \{1,\dots,m\}\), and \(w=\delta/2\) is the width of the split gadgets. 
The segments \(P[\sigma^i_1,\sigma^i_2]\) and \(P[\sigma^i_3,\sigma^i_4]\) are called \(\sigma^i\)s \emph{lower- and upper-mirror segments} respectively.

\subparagraph{Enumeration gadget} The enumeration gadgets, denoted by $\mu^i$ (for $1\le i \le m-1$), consist of just one horizontal segment, with the following vertices:
\[
\begin{aligned}
\mu^i_1 & = \left(\frac{3i}{2}\gamma-w, 0 \right)\,,\\
\mu^i_2 & = \left(\frac{3i}{2}\gamma, 0 \right)\,.\\
\end{aligned}
\]
There are two special enumeration gadgets: the first enumeration gadget \(\mu^0\) consists of one point with coordinates \((0,0)\) (the starting point of the curve $P$), and the last enumeration gadget \(\mu^m\), also called the \emph{budget segment}, consists of two vertices with the following coordinates:
\[
\begin{aligned}
\mu^m_1 & = \left(\frac{3m}{2}\gamma-w, 0 \right)\,,\\
\mu^m_2 & = \left(\frac{3m}{2}\gamma-M, 0 \right)\,.
\end{aligned}
\]

\subparagraph{Pinhole gadget} The exact structure of the pinhole gadgets depends on the curve distance measure. It is important for us however to specify the coordinates of its \emph{pinhole}, that is, the fixed point that every simplification segment must pass through. We denote these pinholes as $c^i$ for $0 \le i \le 2m-1$. Their coordinates are:
\[
c^i = \left(\frac{3}{8}\gamma+\frac{3i}{4}\gamma,\frac{1}{2}\gamma\right)\,.
\]
Each segment of a simplified polygonal curve directed from an enumeration to a split gadget has to pass through a pinhole with an even index, and each segment from a split to an enumeration gadget has to pass through a pinhole with an odd index.

\subsection{Proof of the construction}
\label{sec:proofofproperties}
As stated in the overview, there are several properties that must be satisfied for the construction to work. We will begin by showing that as long as a particular pinhole gadget inserted in the template does not significantly alter the structure of the curve, the last three properties are already satisfied by our construction no matter the shape of the pinhole gadget.

Property 2 holds because all of the pinholes have the same $y$-coordinate. So any segment traversing two pinholes must be a horizontal one with $y$-coordinate \(\gamma/2\), which makes the distance between the segment and the enumeration or split gadget between the pinholes greater than \(\delta\). 
Property 4 obviously holds as our curve has a polynomial number of vertices with rational coordinates.
For Property 3 we have the following lemma: 

\begin{lemma}
	\label{lem:splitenumeratecovered}
	The split and enumeration gadgets can be covered by a vertex of a simplification on the gadget under distances $\overrightarrow{\mathsf{H}}(P,\delta)$, $\overleftarrow{\mathsf{H}}(P,\delta)$, $\mathsf{H}(P,\delta)$, $\mathsf{F}(P,\delta)$, $\mathsf{wF}(P,\delta)$.
\end{lemma}

\begin{proof}
	The width of an enumeration segment is $\delta/2$. Trivially, under all the distance measures, a point on it will cover the whole segment.
	
	The width of a split gadget $\sigma^{i}$ is $\delta/2$ as well, and the height is less than $a_{i}$ which in turn is less than $\delta/2$. Then, the diagonal of the rectangle forming the gadget is less than $\delta/\sqrt{2}$ (refer for an example to Figure~\ref{fig:splitgadget}). Thus, any point on the split gadget is within distance $\delta$ of all the points of the gadget under any of the considered distance measures.
	\begin{figure}
		\centering
		\includegraphics{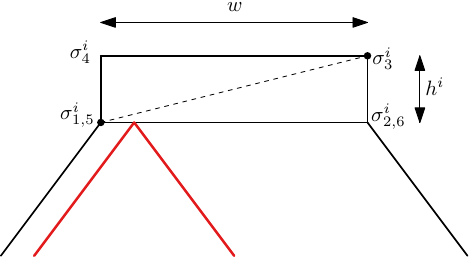}
		\caption{Split gadget shown in black, with a simplification in red. The diagonal distance (dashed) is less than $\delta/\sqrt{2}$, thus all the points of the split gadget are within distance $\delta$ to the vertex of the simplification curve. 
		}
		\label{fig:splitgadget}
	\end{figure}
\end{proof}

Since split and enumeration gadgets are completely covered by any points we place on them, property 3 of our construction holds as long as our pinhole gadget has distance at most \(\delta\) to any line segment between a split and enumeration gadget that passes through the pinhole.

Now that we have shown our properties to hold (given a correct pinhole gadget), we will show how the construction enforces that any optimal simplification of \(P\) with at most \(2m+1\) vertices encodes a subset of \(A\) that sums to exactly \(M\).
\begin{lemma}
	\label{lem:unigadgets}
	If the input curve \(P\) is constructed with pinhole gadgets satisfying the four template properties, any optimal simplification $P'$ of $P$ must have a vertex on all split and enumeration gadgets in the corresponding order, passing through the pinhole of each pinhole gadget.
\end{lemma}
\begin{proof}
	Note that we count the split gadgets starting from $1$, and the enumeration segments starting from $0$.
	We will prove that every segment of a valid simplification of size at most $2i$ of the prefix of $P$ up to enumeration segment \(\mu^i\) passes through a pinhole, and that each segment connects a split and an enumeration gadget.
	There are \(2i\) pinhole gadgets between \(\mu^0\) and \(\mu^i\).
	Since \(P\), by assumption, satisfies Property 2 (no simplification segment can traverse multiple pinholes at once), and Property 1 (any segment of a simplification starting before and ending after a pinhole gadget must pass through the pinhole), at least \(2i+1\) vertices are needed to reach \(\mu_i\) from $\mu_{0}$. Property 3 implies that a simplification with a vertex on each enumeration and split gadget is valid and since it has \(2i+1\) vertices it is optimal. Finding a valid simplification that does not have a vertex on each split and enumeration gadget requires additional vertices so that the simplification is still within distance \(\delta\) of each gadget. This would imply that such a simplification is not optimal.
	%
\end{proof}

\begin{lemma}\label{lem:encodesubsetsum}
	The set of points on the $i$th enumeration gadget that a valid simplification curve with at most $2i+1$ vertices can reach encode all possible subsets of \(\{a_1,\dots, a_i\}\) in their $x$-coordinates: For a given subset \(B'\), the horizontal distance from the corresponding point to the right end of the enumeration segment is $\sum\limits_{a\in B'}a$.
\end{lemma}
\begin{proof}
	From Lemma~\ref{lem:unigadgets} we know that a simplification that reaches \(\mu^i\) using at most \(2i+1\) vertices has each segment pass through a pinhole. Knowing this, we can prove the lemma by induction.
	
	Consider the base case \(i=1\). If we draw a line from \(\mu^0\) through pinhole \(c^0\) there are two intersection points with \(\sigma^1\) which are thus the possible locations of the first vertex of an optimal simplification. The coordinates of these reachable points are \((\frac{3}{4}\gamma,\gamma)\) on the lower mirror segment and \((\frac{3}{4}\gamma+\frac{3}{4}h_1,\gamma+h_1)\) on the upper mirror segment. Drawing lines from these points through the next pinhole $c^{1}$ gives two possible intersection points with \(\mu^1\): \((\frac{3}{2}\gamma,0)\), for the line from the point on the lower mirror segment, and \((\frac{3}{2}\gamma-a_1,0)\), for the line from the point on the upper mirror segment.
	The $x$-coordinates of these points are of the form $\frac{3}{2}\gamma-S$, where $S$ is the sum of the elements in any possible subset of $\{a_{1}\}$, namely the empty set and the set $\{a_{1}\}$ itself. 

	Now for the general case, assume that any valid simplification with $2(i-1) + 1$ vertices can reach a set of points on \(\mu^{i-1}\) that encode all of the subsets of \(\{a_1,\dots,a_{i-1}\}\). That is, the $x$-coordinate of a reachable point is of the form \(\frac{3(i-1)}{2} - S\), where \(S\) is the sum of the elements in some subset of the first $i-1$ integers in $A$. We will now show that going from any one of these reachable points through the next two pinholes will allow us to reach precisely the points \((\frac{3i}{2} - S,0)\) and \((\frac{3i}{2}-S-a_i,0)\) on \(\mu^i\), corresponding to the two subsets created by either including \(a_i\) into that subset or not. 
	
	The line from \((\frac{3(i-1)}{2}\gamma - S,0)\) through \(c^{2(i-1)}\) (the pinhole between \(\mu^{i-1}\) and \(\sigma^i\)), intersects \(\sigma^i\) in the point \((S-\frac{3}{4}\gamma+\frac{3}{2}\gamma i,\gamma)\) on the lower mirror line and the point \((S+\frac{2Sh_i}{\gamma}+\frac{3}{4}(h_i+\gamma(-1+2i)),\gamma+h_i)\) on the upper mirror line. The lines through these points and the next pinhole $c^{2i-1}$ intersect \(\mu^i\) in the points \((\frac{3i}{2} - S,0)\) and \((\frac{3i}{2}-S-a_i,0)\), respectively. 

	Thus, the set of points on the $i$th enumeration gadget that a valid simplification with $2i+1$ vertices can reach encode all possible subsets of $\{a_{1},\dots,a_{i}\}$.
\end{proof}
From these lemmas it follows that any point that can be reached on the last enumeration gadget using at most \(2m+1\) vertices encodes a sum of the integers in a subset of $A$. We specifically constructed the final enumeration gadget to have its rightmost point (i.e. the endpoint of \(P\)) to be the point that encodes the value $M$. Thus there is a simplification curve with $2m+1$ vertices if and only if there is a subset $B\subset A$ with the total sum of its integers being $M$. We conclude with the following theorem.

\begin{theorem}
	\label{thm:unibend}
	Given a curve distance measure, if there exists a pinhole gadget that can be inserted into the described template such that the above properties hold, the curve-restricted GCS problem for that distance measure is NP-hard.
\end{theorem}

\subsection{NP-hardness of the GCS problem under $\Frechet_{\C}(P,\delta), \protect\diHausdorffP_{\C}(P,\delta) $}
Now, we will prove two versions of the GCS problem to be NP-hard using our template by constructing suitable pinhole gadgets.

\subsubsection{NP-Hardness of Fr\'echet Simplification $\Frechet_{\C}(P,\delta)$}
\label{sec:NPfrechet}
\begin{figure}
	\centering
	\includegraphics{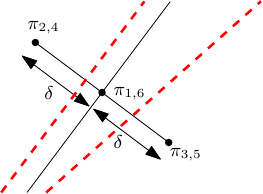}
	\caption{
		A pinhole gadget for the Fr\'echet distance (with even index). Two invalid simplification segments shown as dashed lines. Any simplification that passes below the pinhole will have a distance greater than \(\delta\) to \(\pi_2\). Any simplification that passes above it has distance greater than \(\delta\) to \(\pi_3\).
	}
	\label{fig:pinhole}
\end{figure}%
We construct the pinhole gadget for $\Frechet_{\C}(P,\delta)$ in the following way:
It consists of a chain of five line segments that starts and ends in the pinhole point $c$. The vertices of pinhole gadget \(\pi\) have the following coordinates, \emph{relative to \(c\)} (refer to Figure~\ref{fig:pinhole}): 
%
%
%
\[
\begin{aligned}
\pi_1, \pi_6 & = \left(0, 0 \right)\,,\\
\pi_2, \pi_4 & = \left(\frac{-4\delta}{5}, \frac{3\delta}{5} \right)\,,\\
\pi_3, \pi_5 & = \left(\frac{4\delta}{5}, \frac{-3\delta}{5} \right)\,.
\end{aligned}
\]
These coordinates are for pinhole gadgets with an even index. For pinhole gadgets with an odd index, the sign of the $y$-coordinates is flipped.
\begin{lemma}
	\label{lem:frechetpinhole}
	A segment traversing a pinhole gadget \(\pi\), where \(\pi\) is constructed as described in this subsection, can only be within Fr\'echet distance \(\delta\) of \(\pi\) if it intersects \(\pi\)'s origin (the pinhole).
\end{lemma}
\begin{proof}
	The distance between the outer vertices of the pinhole gadget and the pinhole is exactly \(\delta\). This means there is a segment that passes through the origin that is within Fr\'echet distance $\delta$ of the pinhole gadget $\pi$. Any simplification segment that passes below the origin will have a distance greater than \(\delta\) to \(\pi_3\). Any simplification that passes to the above the origin has distance greater than \(\delta\) to \(\pi_2\).
\end{proof}
To assemble the hardness construction we translate the pinhole gadget to the middle of each of the zig-zag segments of the template curve, and combining Theorem~\ref{thm:unibend} with the lemma above we obtain the following result:

\begin{theorem}
	\label{thm:unipinhole}
	The GCS problem for \(\mathsf{F}_{\C}(P,\delta)\) is NP-hard. 
\end{theorem}

\subsubsection{NP-Hardness of directed Hausdorff Simplification \tinyspace$\protect\diHausdorffP_\C(P,\delta)$}
\begin{figure}
	\centering
	\includegraphics{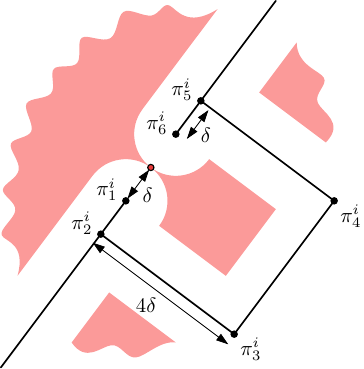}
	\caption{Pinhole gadget for the Directed Hausdorff distance from the simplification to the original curve. The pinhole point is shown in red. The points of the area shaded in light-red have distance greater than \(\delta\) to the curve. Thus no simplification segments can pass through it.}
	\label{fig:dihausgadget}
\end{figure}
Proving NP-hardness for $\protect\diHausdorffP_\C(P,\delta)$ requires another type of pinhole gadget. 
Our new gadget is shown in Figure~\ref{fig:dihausgadget}. Its vertices are as follows, \emph{relative to \(c\)}:

For pinhole gadgets with an even index:
\[
\begin{aligned}
\pi_1 &= \left(\frac{-3\delta}{5},\frac{-4\delta}{5}\right)\,, & \pi_2 &= \left(\frac{-6\delta}{5},\frac{-8\delta}{5}\right)\,, & \pi_3 &= \left(2\delta,-4\delta\right)\,,\\
\pi_4 &= \left(\frac{22\delta}{5},-\frac{-4\delta}{5}\right)\,, & \pi_5 &= \left(\frac{6\delta}{5},\frac{8\delta}{5}\right)\,, & \pi_6 &= \left(\frac{3\delta}{5},\frac{4\delta}{5}\right)\,.
\end{aligned}
\]
For the odd-indexed pinhole gadgets this construction must be mirrored around a horizontal line through the pinhole, so the line segment \(P[\pi_1,\pi_2]\) runs parallel with the line segment from the preceding split gadget to \(\pi_1\).
\begin{lemma}
	\label{lem:dihausgadget}
	A segment traversing a pinhole gadget \(\pi\), where \(\pi\) is constructed as described in this subsection,  can only be within directed Hausdorff distance (directed from the segment to \(\pi\)) \(\delta\) if it intersects \(\pi\)'s origin (the pinhole).  
\end{lemma}
\begin{proof}
	\(\pi^i_1\) and \(\pi^i_6\) both have distance exactly \(\delta\) to the origin so a segment that passes through it has distance \(\delta\) to the gadget. Trying to pass to the right of the origin doesn't work as there is a gap of size \(4\delta\) between the gadget's segments on opposite sides of the origin. On the left side of the origin there are no segments at all that could have distance less than \(\delta\).
\end{proof}
To assemble the hardness construction we translate the pinhole gadget to the middle of each of the zig-zag segments of the template curve (rotating the gadget as needed as stated above). Combining Theorem~\ref{thm:unibend} with the lemma above we obtain the following result:

\begin{theorem}
	\label{thm:dihaushard}
	The GCS problem for \(\overrightarrow{\mathsf{H}}_{\C}(P,\delta)\) is NP-hard. 
\end{theorem}

\subsection{Extending the template}
With our template construction, it is also possible to prove NP-hardness for curve restricted GCS problem under the weak Fr\'echet, directed Hausdorff (directed from curve to simplification), and (undirected) Hausdorff distance measures. This requires new types of pinhole gadgets, but these can be difficult to design. We will therefore use an alternative approach and show how we can combine our template with the gadget introduced for Fr\'echet distance in Section \ref{sec:NPfrechet} to prove NP-hardness for these distance measures, by expanding one of the properties of our construction. To see why the template and this gadget do not trivially imply NP-hardness for these measures by the steps shown earlier in this section, refer to Figure \ref{fig:pinholefail}: Under distance measures like weak Fr\'echet, segments of \(P'\) do not have to pass exactly through the pinhole but can also pass within a small (Euclidean) distance of the pinhole.
This implies that from a single vertex on a split gadget, instead of just one point being reachable on the next enumeration gadget, there is now an interval of reachable points. For each of the points in this interval there is an interval of reachable points on the next split gadget. The union of these intervals gives a bigger interval and so the size of the intervals is increasing as they are propagated throughout the construction.

Each interval on an enumeration segment still contains the point that precisely encodes a subset sum, along with other points whose $x$-coordinates are within some small amount of this subset sum. If the intervals encoding the sums of different subsets stay small enough so that they never overlap with each other, the NP-hardness construction still holds. In this case the endpoint of \(P\) is contained in the interval corresponding to a subset of $A$ with the sum of elements equal to exactly \(M\).
Therefore we can replace the first property required for our construction (Any segment of \(P'\) starting before a pinhole gadget and ending after the pinhole gadget must pass through the pinhole gadget's \emph{pinhole}) with a more easily satisfied property below and still have proof of NP-hardness:
\begin{enumerate}
	\item The endpoint of any segment of \(P'\) starting before a pinhole gadget and ending after the pinhole gadget must have distance less than \(\frac{0.5}{2m}\) to the endpoint of the segment with the same starting point that passes exactly through the pinhole and ends on the same segment of \(P\).
\end{enumerate} 

Since an interval can be propagated at most \(2m\) times (once for each pinhole), the width of the interval on the final enumeration gadget is always less than \(0.5\) on either side of the point reached by only going through pinholes. This means 
the intervals will not overlap, since the points reached by only going through pinholes have distance 1 to each other. This means the construction still implies NP-hardness.

We will now show that this property holds for weak Fr\'echet distance, directed Hausdorff distance from curve to simplification, and undirected Hausdorff distance for the pinhole gadget depicted in Figure~\ref{fig:pinhole}.
\begin{lemma}
	\label{lem:exppin}
	The expanded pinhole property holds for weak Fr\'echet distance, directed Hausdorff distance from curve to simplification, and undirected Hausdorff distance.
\end{lemma} 
\begin{proof}
	For these distance measures, simplification segments from an enumeration gadget to a split gadget (and vice versa) are valid iff the segment has distance less than \(\delta\) to all of the vertices of the pinhole gadget in between. As shown in the proof for Lemma~\ref{lem:encodesubsetsum}, a segment that starts on enumeration gadget \(\mu^{i-1}\) at the point \(v_{\mu^{i-1}}=(\frac{3(i-1)}{2}\gamma - S,0)\) that goes through the pinhole will intersect the split gadget \(\sigma^i\) in the point \((S + \frac{3i}{2}\gamma - \frac{3}{4}\gamma,\gamma)\). So, if our intervals are to remain small enough it must be impossible to have a segment from \(v_{\mu^{i-1}}\) to the point \(v_{\sigma^i} = (S + \frac{3i}{2}\gamma - \frac{3}{4}\gamma + \frac{0.5}{2m},\gamma)\) with distance \(\leq \delta\) to the pinhole gadget. The distance from the segment between these points to the point \(\pi^{2(i-1)}_2\) (the leftmost vertex of the pinhole gadget) can be given by the equation
	\[
	D(P\langle v_{\mu^{i-1}},v_{\sigma^i}\rangle,\pi^{2(i-1)}_2)=\frac{|\frac{5 \gamma + \delta (6 + 50 \gamma m + 48 m S)}{m}|}{10 \sqrt{\frac{25 \gamma^2 m^2 + 6 \gamma m (1 + 8 m S) + (1 + 8 m S)^2}{m^2}}}\,,
	\]
	where \(D(\cdot,\cdot)\) is Euclidean distance.
	Given the values we chose for \(\delta\) and \(\gamma\) in our construction (\(\delta = 4\sum_{a\in A}a, \gamma = \delta^2\)), this distance is greater than \(\delta\). The distance between the line segment starting on \(v_{\mu^{i-1}}\) and ending in \((S + \frac{3i}{2}\gamma - \frac{3}{4}\gamma - \frac{0.5}{2m},\gamma)\) to \(\pi^{2(i-1)}_3\) is also greater than \(\delta\). This same argument also applies to line segments ending in the upper mirror segment, and to segments from a split gadget to an enumeration gadget.
\end{proof}

Combining Lemma~\ref{lem:exppin} and Theorem~\ref{thm:unibend} gives the following result:
\begin{theorem}\label{thm:CR3D}
	The GCS problem for \(\mathsf{wF}_{\C}(P,\delta), \overleftarrow{\mathsf{H}}_{\C}(P,\delta), \mathsf{H}_{\C}(P,\delta)\) is NP-hard.
\end{theorem}

In Section~\ref {sec:NRHD} (Corollary~\ref{cor:CRHD}) we also prove strong NP-hardness for curve-restricted GCS under undirected Hausdorff distance. 

\begin{figure}
	\centering
	\includegraphics{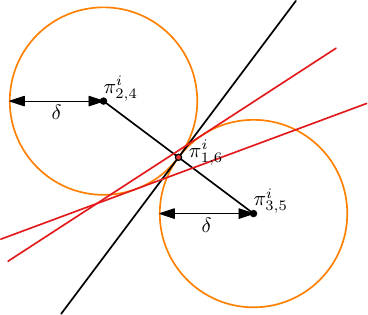}
	\caption{The pinhole gadget. The orange circles indicate all points with distance \(\leq \delta\) to the vertices $\pi^{i}_{1}=\pi^{i}_{3}$ and $\pi^{i}_{2}=\pi^{i}_{4}$ respectively. Since weak Fr\'echet distance allows for non-monotonic mapping, this implies that the segments of valid simplification are not forced to pass through the pinhole. As long as a simplification segment intersects both circles, it covers the pinhole gadget.
	}
	\label{fig:pinholefail}
\end{figure}

\section{Curve-restricted GCS under Discrete Fr\'echet Distance} \label{subsec:CRDF}
In this section we present an $O(n^3)$ time algorithm for $\dFrechet_{\C}(P,\delta)$. 
Let $P = \langle p_1, \ldots, p_n \rangle$ be a polygonal curve.  We
first argue that there is only a discrete set of candidate points we
need to consider for vertices of the output curve.  Let $\cal A$ be
the arrangement of $n$ disks of radius $\delta$ centered on the points
in $P$, and let $C = \langle c_1, \ldots, c_m \rangle$ be the sequence
of intersections between the polyline $P$ and $\cal A$, in order of
$P$.

\begin {observation}
Under the discrete \Frd,
if there exists a curve-restricted simplification $P' = \langle q_1, \ldots, q_k \rangle$ of $P$, then there exists a subsequence of $C$ of length $k$ which is a simplification of $P$.
\end {observation}

Clearly, $C$ consists of at most $m \in O(n^2)$ points.
Bereg \etal~\cite{bjwyz-s3pcudfd-08} show how to compute the minimal vertex-restricted simplification of $\cal A$. We cannot apply their result directly by treating all points in $C$ as vertices, since we do not require the simplification to be mapped to all such points, only those in $P$. However, we can design an algorithm in similar fashion.

Define $K(i,j)$ to be the minimum value $k$ such that there exists a subsequence $c_1, \ldots, c_j$ of length $k$ that has discrete Fr\'echet distance at most $\delta$ to the sequence $p_1, \ldots, p_i$. If no such sequence exists at all, we set $K(i,j)=\infty$ (note that this happens if and only if the distance between $p_i$ and $c_j$ is larger than $\delta$).
We will design a dynamic program to calculate all $nm$ values $K(i,j)$.

First, we observe that we only need to match every point in $P$ once. In principle, the discrete Fr\'echet distance is defined by a sequence of pairs of points, one from each sequence, subsequent pairs can advance in either of the two sequences. However, we never need to consider the second case here.  
\begin {observation}
For the optimal solution $P'$ of length $k$, there exists a Fr\'echet matching consisting of exactly $n$ pairs of points, each matching a unique point in $P$ to a point in $P'$.
\end {observation}
\begin {proof}
Suppose there would be two consecutive pairs $(p_i,q_j)$ and $(p_i,q_{j+1})$ matching the same point $p_i$ in $P$ to different consecutive points $q_j$ and $q_{j+1}$ in $P'$. Since $P'$ is a minimum-length subsequence of $C$, both $q_j$ and $q_{i+j}$ must also match to another points: there must be a pair $(p_{i-1},q_j)$ (otherwise we could eliminate the point $q_j$), and there must be a pair $(p_{i+1},q_{j+1})$ (otherwise we could eliminate the point $q_{j+1}$.
But then, one of the pairs $(p_i,q_j)$ and $(p_i,q_{j+1})$ is superfluous: we can remove either one and still have a valid Fr\'echet matching between $P$ and $P'$.
\end {proof}
This observation implies that when calculating the value of $K(i,j)$, we only have to consider values of the form $K(i-1,j')$, where $1 \le j' \le j$.
Specifically, if $p_{i-1}$ and $c_j$ are within distance $\delta$, then 
\[
K(i,j) = \min \left( K(i-1,j), \min_{1 \le j' < j} \left( K(i-1,j') + 1\right) \right)
\quad \mathrm {\ if\ distance\ } < \delta, \mathrm {\ otherwise\ } \infty.
\]
This definition immediately gives an $O(n^4)$ algorithm to compute $K(n,m)$.
We can improve on this by maintaining a second table with prefix minima. Let $M(i,j) = \min_{1 \le j' \le j} K(i,j)$. Then we have the recursive system
\[
K(i,j) = \min \left( K(i-1,j), M(i-1,j-1) + 1 \right) \;\mbox{and}\;\;
M(i,j) = \min \left( M(i,j-1), K(i,j) \right),
\]
%
which can clearly be calculated in constant time per table entry, and overall saves a linear factor.

\begin{theorem} \label{thm:CRDF}
	Given a polygonal curve $P$ with $n$ vertices and $\delta>0$, an optimal solution to the $\dFrechet_{\C}(P,\delta)$ can be computed in $O(n^3)$ time and $O(n^2)$ space.
\end{theorem}
It is an interesting question whether this time bound can be improved. A more greedy approach fails because of the requirement that $P'$ is a subsequence of $C$, and not just a subset: a local choice to advance might have implications later. 


\section{A Linear Time Algorithm for Curve-restricted Fr\'echet Distance in $\Reals^1$}  \label{subsec:CRFD1}
In this section we provide a greedy algorithm for the curve-restricted GCS problem in $\Reals^1$ under the Fr\'echet distance.
We describe our algorithm using the man-dog terminology that is often used in the literature on Fr\'echet distance:
Initially a man and his dog start at $p_1$. The man walks along $P$ until his distance to the dog exceeds $\delta$. Now if there is a turn between the man and the dog, the dog marks its current position and jumps over the turn and stays at distance exactly $\delta$ away from the man. If there is no turn in between, the dog just follows the man at distance exactly $\delta$ and stops when the man arrives at the next turn or the end. Once they both end the walk at $p_n$ we report the positions marked by the dog as $P'$. See \figref{fig:Frechet1D}. We have the following theorem:

\begin{figure} [t!]
	\centering
	\includegraphics{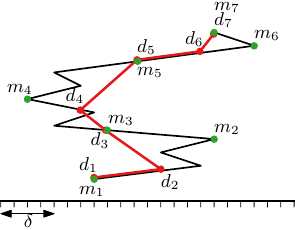}
	\caption{The time-stamped traversal made by the man $m$ and the dog $d$. The red lines indicate the jumps by the dog.}
	\label{fig:Frechet1D}
\end{figure}
\begin{theorem} \label{thm:CRFD1}
	Given a polygonal curve $P$ in $\Reals^{1}$ with $n$ vertices and $\delta>0$, an optimal solution to the curve-restricted GCS problem under Fr\'echet distance can be computed in linear time.
\end{theorem}

\begin{proof}
	Let $Q=\langle q_1,\cdots,q_k \rangle$ be the optimal simplified curve and let $P'=\langle p'_1,\cdots,p'_m\rangle$ be the curve returned by the algorithm. For the sake of a contradiction, we assume that $k<m$. Observe that every time $\mathsf{Dir}$ changes (some turn occurs) and $\|p_i-s\|>\delta$, (at least one and at most two) vertex in $P'$ will be added. Both curves have $p_1$ and $p_n$ as their first and last vertices, respectively, in common. The remaining argument is that $Q$ has fewer internal vertices than $P'$ made by the two conditional loops above. The algorithm incrementally updates $P'$ in a way that when $\textsf{Dir}$ changes always a vertex will be added to $P'$ that has the distance exactly $\delta$ to the turn vertex in $P$. Observe that if $Q$ tends to have fewer number of vertices should skip one of this critical vertices which results in having distance greater than $\delta$ to turn vertices in $P$. Therefore $\Frechet(P,Q)>\delta$ and contradiction. 
\end{proof}

Here we present the pseudocode of the algorithm that we verbally proposed. $x(.)$ indicates the $x$-coordinate of points.

\begin{algorithm} [htpb]
	\DontPrintSemicolon
	\SetKwFor{ForAll}{for each}{:}{}
	\caption{Algorithm for curve-restricted GCS problem for $\Frechet_\C(P,\delta)$ in $\Reals^1$.}
	\label{alg:CRFD1}
	
	$i,j \leftarrow 1$,~
	$s \leftarrow p_i$,~
	$p'_j \leftarrow s$,~
	$j\leftarrow j+1$,~
	$i \leftarrow i+1$\;
	\lIf{$x(p_i)>x(s)$}
	{$\mathsf{Dir}\leftarrow \mbox{`right'}$}
	\lElse {$\mathsf{Dir}\leftarrow \mbox{`left'}$}
	\For{$i=2$ \KwTo {$n$}}
	{
		\lIf{$i\geq n$}{$p'_j\leftarrow p_n$}
		\Else{
			\If{$x(p_i)\geq x(s)$}
			{	
				\lIf{$\|p_i-s\|\leq \delta$}{{\bf Continue}}
				\Else{
					\lIf{$\mathsf{Dir}=\mbox{`left'}$}
					{$p'_j\leftarrow s$,~$s\leftarrow p_i-\delta$,~$j\leftarrow j+1$
					}
					\lElse{
						$s\leftarrow p_i-\delta$
					}
				}
				$\mathsf{Dir}\leftarrow \mbox{`right'}$
			}
			\Else
			{
				
				\lIf{$\|p_i-s\|\leq \delta$}{{\bf Continue}}
				\Else{
					\lIf{$\mathsf{Dir}=\mbox{`right'}$}{
						$p'_j\leftarrow s$,~$s\leftarrow p_i+\delta$,~$j\leftarrow j+1$ 
					}
					\lElse{
						$s\leftarrow p_i+\delta$
					}
				}
				$\mathsf{Dir}\leftarrow \mbox{`left'}$\;
			}	
			
		}
	}
	\KwRet{$\langle p'_1 ,\cdots, p'_{j}\rangle$}

\end{algorithm}


\section{Approximation Algorithm for Non-restricted GCS under the Fr\'echet Distance} \label{sec:NRFD}
In this section we present a simple approximation algorithm for the
non-restricted variant of GCS problem that discretizes the feasible space
for the vertices of the simplified curve. 
The idea is to compute a polynomial number of shortcuts in the discretized space, and (approximately) validate for each shortcut whether it is within \Frd\ $\delta$ to a subcurve of $P$. For every subcurve of $P$ we incrementally add the valid shortcuts to the edge set of a graph $G$ until all the shortcuts have been processed. Once $G$ is built, we compute the shortest path in $G$ and return $P'$.  To speed up the validation for each shortcut, we use a data structure to decide whether the \Frd\ between a shortcut and a subcurve of $P$ is at most $\delta$. For a better understanding of our algorithm, we introduce some notation. Consider a ball $B(o,r)$ of radius $r > 0$ centered at $o\in \Reals^d$.
%
Let ${\partition(\Reals^d, l)}$ be a {\em partitioning} of $\Reals^d$ into a set of disjoint cells (hypercubes) of side length $l$ that is induced by axis parallel hyperplanes placed consecutively at distance $l$.
For any $1\leq i\leq n$ we call ${\C_i = \C_i(r, l)} = \{c\in \partition(\Reals^d, l)\,|\, c\cap B(p_i,r)\neq\emptyset\}$ a \emph{discretization} of $B(p_i,r)$. Let $\G_i$ be the set of {\em corners} of all cells in $\C_i$.


\begin{figure}[t]
	\centering
	\includegraphics[width=0.16\textheight]{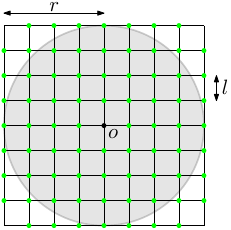}
	\caption{$G_o$ as the collection of cells in $\partition(\Reals^d,l)$ that intersect $B(o,r)$, and the set of corners highlighted in green.}
	\label{fig:discretization}
\end{figure}

\begin{algorithm}[htbp]
	\LinesNotNumbered
	\SetKwFunction{MinLinkSimp}{\textsc{MinLinkSimp{$(P,\delta,\eps$)}}}
	\SetKwFunction{validate}{\textsc{Validate}}
	\caption{Algorithm for non-restricted GCS problem for Fr\'echet distance.}
	\label{alg:NRFD}
	
	\BlankLine 
	
	\lForAll{$i \in \{1,\cdots, n\}$}
	{
		Compute $\C_i(\delta,\big(\eps\delta/(4\sqrt{d})\big)$ and $\G_i$\label{step:partition} 
	} 
	$E\leftarrow \emptyset,$~
	$V \leftarrow \emptyset,$~
	$\C_1\leftarrow p_1\cup \C_1,$~
	$\C_n\leftarrow p_n\cup \C_n$\;
	\ForAll{$\C_i~\mbox{{\bf and}}~\C_j$, {\bf with} $1\leq i\leq j \leq n$} 
	{\ForAll{$c_1 \in \C_i~\mbox{{\bf and}}~c_2 \in \C_j$} 
		{ \lIf{$\validate({\segment{c_1c_2}},P[i,j]) = true$}
			{$E \leftarrow E \cup {\segment{c_1c_2}},~V\leftarrow V\cup \{c_1,c_2\}$} \label{step:edgeSetij}
		}
	}
	
	\KwRet the shortest path between $p_1$ and $p_n$ in $G=(V,E)$.
	
\end{algorithm}

As we can see \algref{alg:NRFD} is a straightforward computation of valid shortcuts and shortest path in the shortcut graph $G$. The \validate procedure takes a shortcut $\segment{c_1c_2}$ and a subcurve $P[i,j]$ as arguments and its task is to (approximately) decide whether $\Frechet(\segment{c_1c_2},P[i,j])\leq \delta$ or not. In particular, it returns {\em true} if $\Frechet(\segment{c_1c_2},P[i,j])\leq (1+\eps/2)\delta$ and {\em false} if $\Frechet(\segment{c_1c_2},P[i,j]) > (1+\eps)\delta$. We efficiently implement the \validate procedure (lines~\ref{step:edgeSetij}) by means of the data structure in~\cite{dh-jydcf-13}. 
Here we slightly rephrase the theorem that refers to the data structure in~\cite{dh-jydcf-13} according to our terminology:

\begin{lemma}
	[Theorem 5.9 in~\cite{dh-jydcf-13}] \label{lem:dhStructure}
	Let $P$ be a polygonal curve in $\Reals^d$ with $n$ vertices and $0<\eps\leq 1/8$ a real value. One can construct a data structure of size $O\big((\eps^{-d}\log^2(1/\eps))n\big)$ and construction time of $O\big((\eps^{-d}\log^2(1/\eps))n\log^2 n\big)$, such that for any query segment $\segment{ab}$ in $\Reals^d$ and two vertices $p_i$ and $p_j$ in $P$ with $1\leq i \leq j \leq n$, one can compute a $(1+\eps)$-approximation of $\Frechet(\segment{ab},P[i,j])$ in $O(\eps^{-2}\log n \log \log n)$ query time.
\end{lemma}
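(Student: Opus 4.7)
The plan is to follow the framework of Driemel and Har-Peled, built around three ingredients: (i) a hierarchy of coarse-to-fine simplifications of $P$, (ii) a canonical-interval data structure that decomposes an arbitrary query subcurve $P[i,j]$ into few pre-processed pieces, and (iii) a scale-selection procedure that turns a constant-factor Fr\'echet estimate into a $(1+\epsilon)$-approximation at query time.

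First I would build, for every scale $\mu_k = 2^{k}\mu_0$ with $k = 0,1,\ldots,O(\log n)$, a $(\mu_k,c)$-simplification $P_k$ of $P$: a vertex subsequence whose consecutive vertices are at least $\mu_k$ apart and whose Fr\'echet distance to $P$ is at most $c\mu_k$, for an absolute constant $c$. Each such simplification can be produced greedily in near-linear time. To allow queries on arbitrary $P[i,j]$, I would then overlay a balanced binary tree on the vertex indices of $P$; each node stores, for every scale $\mu_k$, the restriction of $P_k$ to its canonical interval. The bounded-spacing property ensures that at scale $\mu_k$ a canonical interval of diameter $\ell$ contributes only $O(\ell/\mu_k)$ vertices, and summing over $O(\log n)$ scales and all canonical intervals gives $O(n\log n)$ raw storage. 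The extra $\epsilon^{-d}\log^{2}(1/\epsilon)$ factor appears when, inside each scale, one tabulates directional/positional grids of resolution $\Theta(\epsilon\mu_k)$ needed to resolve inter-piece alignments, and the extra $\log^{2} n$ factor in construction comes from sorting events and building the tree.

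To answer a query $\delta_F(\segment{ab},P[i,j])$, I would decompose $P[i,j]$ into $O(\log n)$ canonical pieces. First compute a constant-factor estimate $r_0$ in $O(\log n)$ time by comparing the segment endpoints with stored endpoint and diameter information of each piece. Then binary search over $O(\log(1/\epsilon))=O(\log\log n)$ candidate scales around $\mu \approx \epsilon r_0$: at scale $\mu$, stitch together the restrictions of $P_{k}$ from the canonical pieces to obtain a simplified polygonal curve with $O(\epsilon^{-1})$ vertices per piece, and run Alt--Godau on $\segment{ab}$ against that curve in $O(\epsilon^{-2})$ time per piece. Total query cost is $O(\epsilon^{-2}\log n \log\log n)$.

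The main obstacle is correctness of the scale-selection step. One needs to show that when the selected scale satisfies $\mu = \Theta(\epsilon\cdot\delta_F(\segment{ab},P[i,j]))$, Alt--Godau on $\segment{ab}$ versus the stitched simplification returns a value within $(1\pm\epsilon)\delta_F(\segment{ab},P[i,j])$, and that gluing $O(\log n)$ canonical pieces along shared endpoints does not inflate the error. Both facts rest on the triangle inequality for Fr\'echet distance together with the defining property that a $(\mu,c)$-simplification is $c\mu$-close to its source, so the additive error is $O(\mu)=O(\epsilon\cdot\delta_F)$, matching the multiplicative $(1+\epsilon)$ target. Verifying these approximation bounds, and especially arguing that the stitching error telescopes rather than accumulates, is where most of the technical work lives.
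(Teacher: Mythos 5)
The paper does not prove this lemma at all: it is imported verbatim as a black box, a restatement of Theorem~5.9 of Driemel and Har-Peled~\cite{dh-jydcf-13}, with only a remark that the authors ``slightly rephrase their lemma according to our terminology.'' So there is no in-paper proof against which to compare your sketch.

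Treating your write-up as a reconstruction of the cited result, it is in the right spirit: Driemel and Har-Peled do combine a geometric hierarchy of $\mu$-simplifications of $P$, a canonical (segment-tree style) decomposition that lets an arbitrary subcurve $P[i,j]$ be covered by $O(\log n)$ precomputed pieces, and a coarse-to-fine scale-selection step that upgrades a constant-factor estimate to a $(1+\epsilon)$-approximation. However, your sketch does not yet actually account for the bounds in the statement. You wave at ``directional/positional grids of resolution $\Theta(\epsilon\mu_k)$'' to produce the $\epsilon^{-d}\log^2(1/\epsilon)$ storage factor, but you never say what is stored in those grids or why that many cells per vertex suffice; that is precisely the technically heavy part of the cited construction. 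Similarly, for a segment against a curve of $O(\epsilon^{-1})$ vertices, a single decision run of Alt--Godau is $O(\epsilon^{-1})$, not $O(\epsilon^{-2})$, so your per-piece query cost estimate does not obviously match the stated $O(\epsilon^{-2}\log n\log\log n)$ bound; if the $\epsilon^{-2}$ factor is necessary it must come from something in the cited machinery that your sketch omits. Finally, the ``stitching error telescopes'' claim is asserted but not argued: gluing $O(\log n)$ canonical pieces under the Fr\'echet distance needs care because Fr\'echet distance does not compose additively across concatenation without controlling the matching at the shared endpoints. None of these gaps indicates a wrong approach --- they indicate that reproducing this result from scratch is a paper-sized undertaking, which is why the authors (correctly) cite it rather than prove it.
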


\begin{lemma}
	\label{lem:validate}
	Let $0<\eps\leq 1$ and let $\segment{ab}$ be a segment in $\Reals^d$ such that $a$ and $b$ are confined within some two cells $h' \in \G_i$ and $h''\in \G_j$, respectively, with $1 \leq i \leq j \leq n$. 
	\begin{enumerate}
		\item If $\Frechet(\segment{ab},P[i,j])\leq \delta$, then for all corners $c' \in h'$ and $c'' \in h''$ $\validate(\segment{c'c''},P[i,j])$ returns `true'. \label{prop:YesDecision}
		
		\item If $\validate(\segment{c'c''},P[i,j])$ returns `false' for all corners $c' \in h'$ and $c'' \in h''$, then   $\Frechet(\segment{ab},P[i,j]) > (1+\eps/4) \delta$. \label{prop:NoDecision}
		
	\end{enumerate} 
\end{lemma}

\begin{proof} 
	\ref{prop:YesDecision}. Let $c'$ be an arbitrary corner of $h'$ and $c''$ an arbitrary corner
	of $h''$. Note that $\mathsf{Diam}(h')=\mathsf{Diam}(h'')=
	\sqrt{d}\cdot(\eps\delta/4\sqrt{d})= \eps\delta/4$, where
	$\mathsf{Diam}(h')$ and $\mathsf{Diam}(h'')$ are the diameter of cells
	$h'$ and $h''$, respectively. Hence $\ell_1= \|a-c'\|\leq (\eps/4)
	\delta$ and $\ell_2= \|b-c''\|\leq (\eps/4)\delta$. Given the two
	segments $\segment{ab}$ and $\segment{c'c''}$ the \Frd\ between them
	is $\Frechet(\segment{c'c''},\segment{ab})=\max
	\{\ell_1,\ell_2\}=(\eps/4)\delta$ by~\cite{ag-cfdb-95}. Now by
	applying a triangle inequality between the segments and path $P[i,j]$
	we have: \[\Frechet(\segment{c'c''},P[i,j])\leq
	\Frechet(\segment{ab},P[i,j])+\Frechet(\segment{c'c''},\segment{ab})
	\leq \delta+\eps\delta/4 = (1+\eps/4)\delta.\] We build the
	data structure of \lemref{lem:dhStructure} for $\validate$ with
	respect to parameter $\eps/6$ and the whole curve $P$ that leads
	to a $(1+\eps/6)$-approximation of the \Frd\ between
	$\segment{c'c''}$ and $P[i,j]$ returned by
	$\validate(\segment{c'c''},P[i,j])$. We denote this approximation by
	$D$ and have:
	\[D \leq (1+\eps/6)\cdot\Frechet(\segment{c'c''},P[i,j]) \leq (1+\eps/6)(1+\eps/4)\delta= (1+10\eps/24 + \eps^2/24)\delta \leq  (1+\eps/2)\delta,\] for $\eps\leq 1$.
	Therefore, for all corners $c' \in h'$ and $c'' \in h''$
	$\validate(\segment{c'c''},P[i,j])$ returns `true' as claimed.
	
	\ref{prop:NoDecision}. If $\validate(\segment{c'c''},P[i,j])$ returns 'false' then we know that the value $D$ returned by the data structure is at least: 
	
	\[D \geq \Frechet(\segment{c'c''},P[i,j]) >  (1+\eps/2)\delta.\] Now by
	applying a triangle inequality between the segments and path $P[i,j]$
	we have: \[(1+\eps/2)\delta - \eps\delta/4 <\Frechet(\segment{c'c''},P[i,j]) - \Frechet(\segment{c'c''},\segment{ab}) \leq
	\Frechet(\segment{ab},P[i,j]), \] therefore $\Frechet(\segment{ab},P[i,j])> (1+\eps/4)\delta$. This completes the proof.
\end{proof}

\begin{lemma} \label{lem:twoOPT}
	There exists a $P'=\Frechet_{\N}(P,\delta)$ such that every link $\segment{ab}$ in $P'$ does not match to a proper subsegment of $P[i,i+1]$, i.e., $P(i,i+1)$ for some all $1< i<n$.
\end{lemma}

\begin{proof} 
	We use proof by contradiction. Assume that such a $P'$ fulfilling the
	condition above does not exist, hence there is a link $\segment{ab}$ whose endpoints
	match to neither $p_i$ nor $p_{i+1}$, for some $1< i<n$.
	Let $(\sigma,\theta)$ be a Fr\'echet matching realizing
	$\Frechet(P,P')\leq \delta$ and let $t_i$ and $t_{i+1}$ be two real
	values with $0\leq t_i<t_{i+1}\leq 1$ such that $\sigma(t_i))=i$,
	$\sigma(t_{i+1})={i+1}$ and $\theta(t_i)=x$, $\theta(t_{i+1})=y$, for
	some $1<x < y<\#P'+1$. Note that since $\segment{ab}$ matches to neither
	$p_i$ nor $p_{i+1}$, thus $x< a< b < y$. Now let $P''=P'[1,x]\circ
	\segment{xy} \circ P'[y,n]$. Clearly, $\Frechet(P,P'')\leq \delta$ due
	to $\Frechet(P[i,i+1],\segment{xy})\leq \delta$ and also $P''$ has the
	same number of vertices as $P'$ has (see
	\figref{fig:twoOPT}). Therefore, $P''=\Frechet_{\N}(P,\delta)$
	exists and this is a contradiction.
	\begin{figure} [h]
		\centering
		\includegraphics[width=.45\linewidth]{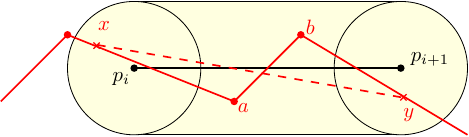}
		\caption{The shortcut $\segment{xy}$ has the Fr\'echet distance at most $\delta$ to $P[i,i+1]$.}
		\label{fig:twoOPT}
	\end{figure}%
\end{proof}
\begin{lemma} \label{lem:sublinkExists}
	Let $\segment{ab}$ be an arbitrary link in $P'=\Frechet_{\N}(P,\delta)$ fulfilling the condition in~\lemref{lem:twoOPT}. The following statements are true: 
	\begin{enumerate}
		\item There exist a sublink $\segment{a'b'} \subseteq \segment{ab}$ and an integer pair $(i,j)$ with $1\leq i \leq j \leq n$ such that $\Frechet(\segment{a'b'},P[i,j])\leq \delta$. \label{prop:sublink1} 
		\item There exist an integer pair of cells $(h',h'')$ and an integer pair $(i,j)$ with $1\leq i \leq j \leq n$ and $h' \in \G_i$, $h'' \in \G_j$, such that $a'$ and $b'$ are confined within $h'$ and $h''$, respectively and $\validate(\segment{c'c''},P[i,j])$ returns `true' for all corners $c' \in h'$ and $c'' \in h''$. \label{prop:sublink2}
	\end{enumerate} 
	
\end{lemma}

\begin{proof} 
	(\ref{prop:sublink1}) 
	Let $P'$ be a solution to $\Frechet_{\N}(P,\delta)$ satisfying~\lemref{lem:twoOPT}.
	Then $\segment{ab}$ as an arbitrary link in $P'$ cannot match to a proper subsegment in $P[i,i+1]$ for all $1< i<n$. Now let $(\sigma,\theta)$ be a Fr\'echet matching realizing $\Frechet(P,P')\leq \delta$ and let $t_a$ and $t_b$ be two real values with $0\leq t_a<t_b\leq 1$ such that $\sigma(t_a)=p$, $\sigma(t_b)=q$ and $\theta(t_a)=a$, $\theta(t_b)=b$. We then have two cases: 
	\begin{itemize}
		\item $\segment{ab}$ only intersects $B(p_i,\delta)$. Now let $0\leq t_i<1$ be a value such that $\sigma(t_i)=i$. Then clearly, $\theta(t_a)\leq \theta(t_i) \leq \theta(t_b)$ and correspondingly, $\sigma(t_a)\leq \sigma(t_i)= i \leq \sigma(t_b)$. Hence, it holds that $1 \leq p \leq i \leq q \leq n$ for some $1< i <n$. Therefore there exists a sublink $\segment{a'b'}= B(p_i,\delta)\cap \segment{ab}$ and an integer pair $(i,j)$ with $i=j$ such that $\Frechet(\segment{a'b'},P[i,j])\leq \delta$ (see \figref{fig:sublinkExists} (a)).

		\item $\segment{ab}$ intersects all the balls $B(p_i,\delta), \cdots,B(p_j,\delta)$ for some $1\leq i<j\leq n$ in order from $i$ to $j$. Now let $0\leq t_i<t_j\leq 1$ be a value such that $\sigma(t_i)=i$ and $\sigma(t_j)=j$. Then clearly, $\theta(t_a)\leq \theta(t_i)  < \theta(t_j) \leq \theta(t_b)$ and correspondingly, $\sigma(t_a)\leq \sigma(t_i)  < \sigma(t_j) \leq \sigma(t_b)$. Hence, it holds that $1 \leq p \leq i < j \leq q \leq n$ for some $1\leq i < j \leq n$. Now let $\segment{a'b'}= P'[\theta(t_i),\theta(t_j)]$. Following the Fr\'echet matching $(\sigma,\theta)$, we have $\Frechet(\segment{a'b'}, P[i,j])\leq \delta$ (see \figref{fig:sublinkExists} (b)).
	\end{itemize}
	Therefore such a $\segment{a'b'}\subseteq \segment{ab}$ and pair $(i,j)$ with $1\leq i\leq j \leq n$ exists.
	
	(\ref{prop:sublink2}) By Property~\ref{prop:sublink1}, we know that $\segment{a'b'}$ and an integer pair $(p_i,p_j)$ with $1\leq i \leq j \leq n$ exist such that $\Frechet(\segment{a'b'},P[i,j])\leq \delta$. Plugging this into~\lemref{lem:validate} completes the proof.	
	\begin{figure} [htbp]
		\centering
		\includegraphics[width=.6\textheight]{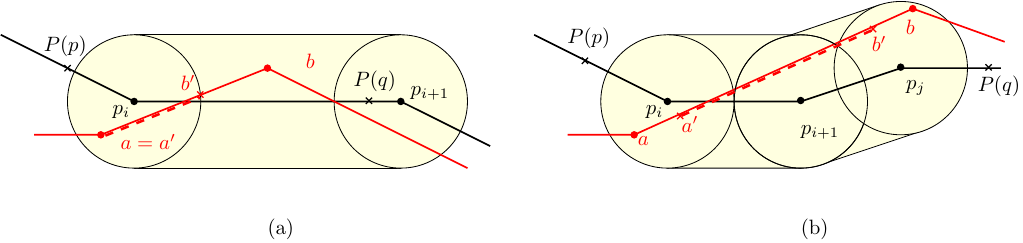}
		\caption{(a) Any point on the subsegment $\segment{a'b'}$ can be matched to $p_i$, thus $\Frechet(\segment{a'b'},P[i,j]\leq \delta$ with $i=j$. (b) Following the optimal matching, $\Frechet(\segment{a'b'},P[i,j])\leq \delta$ with $i < j$.}
		\label{fig:sublinkExists}
	\end{figure}
\end{proof}

The following lemmas allow us to conclude~\thmref{thm:NRFD}.

\begin{lemma} \label{lem:SPExists}
	The shortest path $P'_\mathsf{alg}$ returned by \algref{alg:NRFD} exists and $\Frechet(P,P'_\mathsf{alg})\leq (1+\eps)\delta$. 
\end{lemma}

\begin{proof} 
	Let $P'=\Frechet_{\N}(P,\delta)$, $\segment{ab}$ and $\segment{bc}$ be two consecutive links in $P'$. By \lemref{lem:sublinkExists} (Property~\ref{prop:sublink2}) there exist an integer pair $(i,j)$ with $1\leq i\leq j\leq n $, and two corners $c_i\in \G_i$ and $c_j\in G_j$ close to a sublink $\segment{xw}$ of $\segment{ab}$ such that $\validate(\segment{c_ic_j},P[i,j])$ returns `true'. We have two followng cases:
	
	\begin{itemize}
		\item if $b \in B(p_j,\delta)$: then there exist an integer $k$ with $1\leq j\leq k\leq n $, and two corners $c'_{j}$ and $c_k$ close to a sublink $\segment{yz}$ of $\segment{bc}$ such that $\validate(\segment{c'_{j}c_k},P[j,k])$ returns `true'. Since $P'[x,y]$ has only one intermediate vertex i.e., $b$ that lies within $B(p_j,\delta)$, it is easy to see that $\Frechet(\segment{xy},p_j)\leq \delta$ and therefore, $\validate(\segment{c_{j}c'_{j}},p_j)$ returns `true' as well.
		
		\item  if $b \notin B(p_j,\delta)$: then there exits an interger $k$ with $1\leq j+1 \leq k \leq n$, and two corners $c_{j+1}$ and $c_k$ close to a sublink $\segment{yz}$ of $\segment{bc}$ such that $\validate(\segment{c_{j+1}c_k},P[j+1,k])$ returns `true'. Now again  $P'[x,y]$ has only one intermediate vertex i.e., $b$ that lies within $C(\segment{p_jp_{j+1}},\delta)$, thus $\Frechet(\segment{xy},P[j,j+1])\leq \delta$ and therefore, $\validate(\segment{c_{j}c_{j+1}},P[j,j+1])$ returns `true'.
	\end{itemize}   
	This indicates that for any two adjacent links in $P'$ we have three links in the shortcut graph. Since $P'$ is connected the graph has a connected path $P'_{\mathsf{alg}}$. Note that if such a path, $P'_{\mathsf{alg}}$, is found by the algorithm is not the shortest path in $G$, then there must be another path which leads to existing a shortest path either way. The proof for a single link in $P'$ is trivial. To prove $\Frechet(P,P'_\mathsf{alg})\leq (1+\eps)\delta$, let $e$ be a link in $P'_\mathsf{alg}$ and $P_e$ be the corresponding subcurve in $P$ such that $\validate(e,P_e)$ has returned `true'. This implies that $\Frechet(e,P_e)\leq (1+\eps)\delta$, therefore $\Frechet(P,P'_\mathsf{alg})= \max_{e \in P'_\mathsf{alg}}\Frechet(e,P_e) \leq (1+\eps)\delta$.	
\end{proof}

\begin{lemma} \label{lem:numberOfLinks} 
	Let $P'=\Frechet_\N(P,\delta)$ and let $P'_\mathsf{alg}$ be the curve returned by \algref{alg:NRFD}. Then $\#P'_\mathsf{alg}\leq 2(\#P'-1).$
\end{lemma}

\begin{proof} 
	Let $m=\#P'$ where $P'=\Frechet_\N(P,\delta)$. 
	Now for all $m-2$ ($m>2$) intermediate links in $P'$ i.e., $P[2,m]$ the number of links in $P'_\mathsf{alg}[c_2,c_m]$ where $c_1\in B(p_2,\delta)$ and $c_m \in B(q_m,\delta)$ is at most $2(m-2)+1$ in the worst case. This together with the two first and last links $\segment{c_1c_2}$ and $\segment{c_mc_{m+1}}$ in $P'_\mathsf{alg}$, respectively, results in $\#P'_\mathsf{alg}\leq 2m-1$. 
\end{proof}

\begin{lemma} \label{lem:NRFDRuntime}
	\algref{alg:NRFD} runs in $O\Big( \eps^{-d}n\log n\big(\log^2(1/\eps)\log n+ \eps^{-(d+2)}n\log\log n\big) \Big)$ time and uses $O\big((\eps^{-d}\log^2(1/\eps))n\big)$ space. 
\end{lemma}

\begin{proof} 
	The number of cells in each ball is bounded by $O\big((\delta/(\eps\delta/4\sqrt{d}))^d\big)= O(\eps^{-d})$. There are $O(n^2)$ pairs of balls, hence, we have $O(\eps^{-2d}n^2)$ pairs of corners $c'$ and $c''$ to pass to \validate procedure to determine whether $\Frechet(\segment{c'c''},P[i,j])\leq (1+\eps)\delta$ or not for all $c'\in \G_i$ and $c'' \in \G_j$ with $1\leq i\leq j\leq n$. On the other hand by \lemref{lem:dhStructure} we speed up \validate procedure. The construction time takes $O\big((\eps^{-d}\log^2(1/\eps))n\log^2 n\big)$ and its query takes $O(\eps^{-2}\log \log\log n)$ per $\segment{c'c''}$. Having $O(\eps^{-2d}n^2)$ calls on \validate, the whole validation process takes $O((\eps^{-2d+2}n^2 \log n\log \log n))$. Therefore the total runtime would be the sum of the construction time of the data structure of \lemref{lem:dhStructure} and the latter one which is:  $O\Big( \eps^{-d}n\log n\big(\log^2(1/\eps)\log n+ \eps^{-(d+2)}n\log\log n\big) \Big)$ as claimed. The space follows from~\lemref{lem:dhStructure}.
\end{proof}

\begin{theorem} \label{thm:NRFD}
	Let $P$ be a polygonal curve with $n$ vertices in $\Reals^d$, $\delta>0$, and $P'=\Frechet_{\N}(P,\delta)$.
	For any $0<\eps\leq 1$, one can compute in $O^*(n^2\log n \log\log n)$ time and $O^*(n)$ space a non-restricted simplification $P^*$ of $P$ such that $\#P^*\leq 2(\#P'-1)$ and $\Frechet(P,P^*)\leq (1+\eps)\delta$. Here, $O^*$ hides factors polynomial in $1/\eps$.
\end{theorem}

\begin{corollary} \label{cor:NRWFD}
	\thmref{thm:NRFD} also holds for $\wFrechet_{\N}(P,\delta)$.
\end{corollary}
\begin{proof}
	One can modify the data structure in~\lemref{lem:dhStructure} to support queries under the weak Fr\'echet distance by regarding the weak Fr\'echet distance instead of the strong Fr\'echet distance in preprocessing stage (see \cite{dh-jydcf-13} for more details on their data structure). Also following the fact that the triangle inequality holds for the weak Fr\'echet distance, \lemref{lem:validate}, \lemref{lem:twoOPT} and \lemref{lem:sublinkExists} works when applying the weak Fr\'echet distance between a link and a subcurve. Therefore, \corref{cor:NRWFD} can follow from the aforementioned lemmas.
\end{proof}

\section{Strong NP-Hardness for Non-restricted GCS under the Hausdorff Distance} \label{sec:NRHD}

We show that all versions of undirected Hausdorff Distance are strongly NP-hard. This was shown for the vertex-restricted case by van Kreveld \etal~\cite{klw-oopsihfd-18} by a reduction from Hamiltonian cycle in segment intersection graphs. Their proof mostly extends straightforwardly to the curve-restricted and unrestricted case; however, because of the increased freedom in vertex placement we must take some care in the exact embedding of the segment graph: for instance, segments that intersect at arbitrarily small angles could potentially cause the reduction to produce coordinates with unbounded bit complexity. For this reason, we here reduce from a more restricted class of graphs: {\em orthogonal} segment intersections graphs. For completeness, we present the full adapted proof.

Czyzowicz~\etal~\cite {Czyzowicz:1998:SPR:292203.292211} show that Hamiltonian cycle remains NP-complete in 2-connected cubic bipartite planar graphs, and
Akiyama~\etal~\cite {Akiyama1980} prove that every bipartite planar graph has a representation as an intersection graph of orthogonal line segments.
Hence, Hamiltonian cycle in orthogonal segment intersection graphs is NP-complete.

Let $S$ be a set of $n$ horizontal or vertical line segments in the plane. We may assume the segment endpoints have integer coordinates that are linear in $n$ and that all intersections are proper intersections.
We further assume that the intersection graph of $S$ is connected (if not, it will not have a Hamiltonian cycle).
Set $\delta = \frac18$, and let $D \subset \R^2$ be the Minkowski sum of $S$ and a closed ball of radius $\delta$; that is, $D$ is the set of all points at distance at most $\delta$ from $S$. 

Let $P$ be an initial polyline that stays on the union of $S$, and covers all segments of $S$ many times. For instance, we may begin with a spanning tree of the intersection graph of $S$, traverse the tree using a depth-first search, and whenever we encounter a new segment we visit both endpoints before leaving through the intersection point with the next segment. Such a path has linear complexity. Then, we make a linear number of copies of this path and concatenate them, creating a path $P$ of $O(n^2)$ complexity.
Now, $P$ has every ordered permutation of the segments as a subsequence of its vertices, allowing us to freely walk over the arrangement of $S$ when creating an output polyline.

An output polyline $P'$ with Hausdorff distance at most $\delta$ to $P$ must in particular visit the $\delta$-disks around all endpoints of $S$, while staying inside $D$. Since no two such disks are visible to each other within $D$ unless they are endpoints of the same segment (see Figure~\ref {fig:somefigure}), we will need at least one more vertex in $P'$ for each segment, in order to switch to the next. This any output trajectory $P'$ will need to have at least $3n + 1$ vertices (the last vertex is the same as the starting vertex).

Clearly, if the intersection graph of $S$ has a Hamiltonian cycle, following this cycle will yield a solution using $3n + 1$ vertices by simply placing a vertex at each segment intersection and two intermediate vertices at each segment endpoint (see Figure~\ref {fig:somefigure}).
On the other hand, any solution that uses only $3n + 1$ vertices myst be of this form: if we visit any segment more than once (and place a vertex there) we must place at least four vertices on (close to) such a segment. 
Theorem~\ref {thm:NRHD} now follows.

\begin{figure}[tb]
	\centering
	\includegraphics[scale=1]{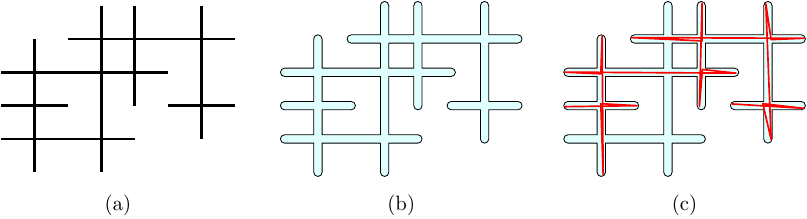}
	\caption{The construction: (a) the segments $S$, (b) the region $D$, (c) a partial simplification.}
	\label{fig:somefigure}
\end{figure}

\begin{theorem} \label{thm:NRHD}
	The non-restricted GCS problem under undirected Hausdorff distance is strongly NP-hard.
\end{theorem}
Since a solution to the reduction never benefits from placing vertices not on $P$, we also immediately obtain an improvement
for the special case $\Hausdorff_{\C}(P,\delta)$.

\begin{corollary} \label{cor:CRHD}
	The curve-restricted GCS problem under undirected Hausdorff distance is strongly NP-hard.
\end{corollary}

	\section{Conclusion}
	In this paper, we systematically studied the global curve simplification problem under different (global) distance measures and constraints in which vertices can be placed. We improved some of the existing results in the vertex-restricted case and we obtained the first NP-hardness results for the curve-restricted version. In the future, providing approximation algorithms, particularly for the curve-restricted case, can be of interest.

\bibliographystyle{abbrv}	
\bibliography{curvesimplification}

\end{document}